\pgfplotsset{compat=1.10}
\def\commutatif{\ar@{}[rd]|{\circlearrowleft}}
\newtheorem{thm}{Theorem}[section]
\newtheorem{prop}[thm]{Proposition}
\newtheorem{lem}[thm]{Lemma}
\newtheorem{cor}[thm]{Corollary}
\theoremstyle{definition}
\newtheorem{defn}[thm]{Definition}
\theoremstyle{remark}
\newtheorem{rmk}[thm]{Remark}
\newtheorem{ex}[thm]{Example}
\begin{document}


\author{Boris Tsvelikhovskiy} 
\affiliation{Department of Mathematics, University of California, Riverside, CA, USA} 

\author{Ilya Safro} 
\affiliation{Department of Computer and Information Sciences, University of Delaware, Newark, DE, USA} 

\author{Yuri Alexeev} 
\affiliation{Computational Science Division, Argonne National Laboratory, Argonne, IL, USA}

\title{Equivariant QAOA and the Duel of the Mixers}

    \begin{abstract}\textbf{Abstract.} Constructing an optimal mixer for Quantum Approximate Optimization Algorithm (QAOA)  Hamiltonian is crucial for enhancing the performance of QAOA in solving combinatorial optimization problems. We present a systematic methodology for constructing the QAOA tailored mixer Hamiltonian, ensuring alignment with the inherent symmetries of classical optimization problem objectives. The key to our approach is to identify an operator that commutes with the action of the group of symmetries on the QAOA underlying Hilbert space and meets the essential technical criteria for effective mixer Hamiltonian functionality.

We offer a construction method specifically tailored to the symmetric group $S_d$, prevalent in a variety of combinatorial optimization problems. By rigorously validating the required properties, providing a concrete formula and corresponding quantum circuit for implementation, we establish the viability of the proposed mixer Hamiltonian. Furthermore, we demonstrate that the classical mixer $B$ commutes only with a subgroup of $S_d$ of significantly smaller order than the group itself, enhancing the efficiency of the proposed approach.

To evaluate the effectiveness of our methodology, we compare two QAOA variants utilizing different mixer Hamiltonians—conventional $B=\sum X_i$ and the newly proposed $H_M$ — in edge coloring and graph partitioning problems across various graphs. We observe statistically significant differences in mean values, with the new variant consistently demonstrating superior performance across multiple independent simulations. Additionally, we analyze the phenomenon of poor performance in alternative warm-start QAOA variants, providing a conceptual explanation supported by recent literature findings.
\vspace{0.1in}

\hspace{-0.15in}\textbf{Keywords:} quantum approximate optimization algorithm, mixer Hamiltonians, warm-start QAOA
\end{abstract}

\maketitle 

\pgfkeys{/pgfplots/scale/.style={
  x post scale=#1,
  y post scale=#1,
  z post scale=#1}
}
\pgfkeys{/pgfplots/axis labels at tip/.style={
    xlabel style={at={(current axis.right of origin)}, xshift=1.5ex, anchor=center},
    ylabel style={at={(current axis.above origin)}, yshift=1.5ex, anchor=center}}
}

\section{Introduction}

In this paper we consider the optimization problem of finding extremal values of a function $F: \mathbb{D}^n \rightarrow \mathbb{R}$, where $\mathbb{D}^n$ represents the set of $n$-element $d$-ary strings and $\mathcal{S}$ is the group of permutations acting on these $d^n$ elements.
The Quantum Approximate Optimization Algorithm (QAOA), proposed in \cite{QAOA}, is a widely used approach for solving the quantum version of the optimization problem. This approach is considered as one of the main candidates to demonstrate practical quantum advantage in future in several areas \cite{herman2023quantum}. Consequently, there is a growing interest to enhance its performance. To bridge the classical and quantum realms, one employs the following correspondences:

\begin{itemize}
    \item $\mathbb{D}^n \rightsquigarrow$ vector space $W$ of dimension $d^n$ with basis $\{v_x\}$ indexed by elements $x\in\mathbb{D}^n$,
    \item Objective function $F \rightsquigarrow$ linear operator $H_P$  acting on $W$,
    \item Minima of $F$ on $\mathbb{D}^n \rightsquigarrow$ lowest energy states of $H_P$ in $W$.
\end{itemize}

Here, the Hamiltonian $H_P$ represents the objective function $F$, meaning it satisfies the equation $H_P(v_{x})=F(x)v_x$ for any string $x\in\mathbb{D}^n$. Another important component of the QAOA approach is an operator referred to as the mixer Hamiltonian $H_M$. This operator plays a pivotal role in the optimization process, as it possesses an easily identifiable ground state, which aids in initializing the optimization process.

The QAOA algorithm involves a multistep transformation of $H_M$ into $H_P$, aiming to obtain a lowest energy state for the latter Hamiltonian. This is achieved by alternately applying exponentials of $H_M$ and $H_P$, with the number of iterations denoted by $p$ (known as QAOA depth). We express this transformation as:
\begin{equation}
    \mathfrak{Q}_p = e^{-i\beta_1 H_M}e^{-i\gamma_1 H_P}\ldots e^{-i\beta_p H_M}e^{-i\gamma_p H_P}.
\end{equation}
The algorithm concludes with a measurement of the resulting state in the standard basis.

While the problem Hamiltonian $H_P$ is uniquely determined by the classical original problem (unless it is decided to be changed, e.g., by sparsification \cite{liu2022quantum}), there is a flexibility in choosing the mixer Hamiltonian $H_M$. The convergence of QAOA is ensured by the adiabatic theorem if $H_M$ satisfies certain conditions. For example, the assumptions outlined in the Perron-Frobenius theorem (Theorem \ref{PF}) are sufficient.

A commonly used mixer Hamiltonian consists of Pauli $X$-gates, $B = \sum\limits_{j=0}^{\ell-1} X_j$, where $\ell$ is the number of qubits required for the problem. However, this choice may not exploit problem-specific attributes. The choice of mixer Hamiltonian has been discussed in the literature. In \cite{HWORVB}, the authors introduced a quantum alternating operator ansatz to allow more general families of Hamiltonian operators. The mixers in that article are useful for optimization problems with hard  constraints that must always be satisfied (thus defining a feasible subspace of $W$) and soft constraints whose violation needs to be minimized.

In \cite{GPSK}, it was experimentally verified (via numerical simulations) that linear combinations of $X$- and $Y$-Pauli gates as mixers can outperform the standard low depth QAOA. More examples can be found in  \cite{BFL,GBOE,SW,ZLCMBE} and subsequent references.

Constructing an optimal mixer for QAOA Hamiltonian is crucial for enhancing the performance of QAOA in solving combinatorial optimization problems. Optimal mixers not only enforce hard constraints and align with the initial state for improved performance but also contribute to the universality and computational efficiency of QAOA, enabling the algorithm to exploit the structure of optimization problems for significant speed-ups and to adapt effectively to constrained problems. Here are a few of the examples:
\begin{enumerate}
    \item \textbf{Enforcing Hard Constraints:} The application of QAOA to problems with constraints presents a notable challenge, especially for near-term quantum resources. Utilizing $XY$ Hamiltonians as mixers has been shown to enforce hard constraints effectively. These mixers can be implemented without Trotter error in certain cases, and they demonstrate significant improvement in performance over traditional $X$ mixers in solving graph-coloring problems, a known challenge for classical algorithms \cite{wang2020xy}.

    \item \textbf{Alignment with Initial State:} The alignment between the initial state and the ground state of the mixing Hamiltonian has been observed to improve QAOA performance. This alignment, mimicking the adiabatic algorithm's requirements, has been particularly beneficial in constrained portfolio optimization, showcasing that an optimal mixer enhances results across different QAOA depths \cite{he2023alignment}.

    \item \textbf{Universality and Computational Efficiency:} The universality of QAOA with optimal mixers extends its applicability across a broader spectrum of problems. Optimal mixers contribute to the quantum computational universality, enabling the solution of complex optimization problems with high efficiency and precision. This universality underpins QAOA's potential in leveraging quantum computing for practical applications \cite{morales2020universality}.

    \item \textbf{Exploiting Problem Structure for Speed-Up:} Recent studies have provided numerical evidence that QAOA, with appropriately chosen mixers and phase separators, can significantly outperform classical unstructured search algorithms in finding approximate solutions to constrained optimization problems. This suggests that optimal mixers are key to leveraging the structure of optimization problems for computational speed-up \cite{golden2022speedup}.

    \item \textbf{Custom Mixers for Constrained Problems:} For constrained optimization problems, especially those involving network flows, custom mixers inspired by quantum electrodynamics (QED) have been shown to preserve flow constraints, leading to an exponential reduction in the configuration space to be explored. This adaptation results in higher quality approximate solutions, underscoring the importance of mixer customization \cite{zhang2020qed}.
\end{enumerate}

In this paper, we extend various investigation into tailoring the mixer Hamiltonian to accommodate groups of classical symmetries inherent in the objective function. In particular, our exploration builds upon our groundwork laid out in \cite{TSA}, where we detailed the construction of mixer Hamiltonians, along with their corresponding ground states, designed for cases where the group of classical symmetries includes the symmetric group $S_n$, encompassing permutations of string elements. While we  presented compelling arguments advocating for the adoption of such mixer Hamiltonians over classical counterpart, practical validation was hindered by the challenge of implementing the suggested matrices as concrete quantum circuits. Our current focus is on cases where the group of classical symmetries involves a different symmetric group, $S_d$, acting by simultaneous permutation of all factors in $\mathbb{D}^n$:
\[
\sigma(d_1,d_2,\ldots,d_n):=(\sigma(d_1),\sigma(d_2),\ldots,\sigma(d_n)).
\]
Considering such cases offers two significant advantages:

\begin{itemize}
\item Many optimization problems exhibit these symmetries (e.g., several versions of graph coloring and partitioning).
\item We can construct a mixer Hamiltonian that commutes with the action of $S_d$ on $W$, which can be easily implemented as a composition of basic quantum gates.
\end{itemize}

The paper is structured as follows for clear and systematic exposition. Section $2$ offers an overview of the main results to orient the reader. In Section $3$, a concise review of Quantum Approximate Optimization Algorithm fundamentals relevant to this study is provided.

Section $4$ presents formulations of the main results and delineates properties concerning the newly proposed Hamiltonian. The subsequent sections, $5$ and $6$,  respectively, explore the classical optimization problems under consideration and provide simulation results for three QAOA versions: one utilizing the classical mixer and the others employing the newly proposed mixers.

In Section $8$, the impossibility of tailoring a mixer Hamiltonian that satisfies the Perron-Frobenius theorem within the context of warm-start QAOA is discussed. Finally, the Appendix offers a conceptual overview of the construction process and provides rigorous verification of the claims made throughout the paper.

\section{Main results} We present a systematic approach to constructing a mixer Hamiltonian for QAOA that aligns with the symmetries inherent in the objective function of the classical optimization problem being addressed. Specifically, our approach focuses on identifying an operator that commutes with the action of the group of symmetries on the Hilbert space for QAOA. Additionally, this operator fulfills the necessary technical requirements to function effectively as a mixer Hamiltonian.

In this work, we provide a method for constructing such an operator tailored to the aforementioned group $S_d$. This group naturally emerges as a set of symmetries in numerous combinatorial optimization problems. We rigorously validate the required properties for the proposed mixer Hamiltonian, $H_M$, offering both a concrete formula and a corresponding quantum circuit for its implementation. In addition, we show that the classical mixer $B$ commutes only with a subgroup of $S_d$ of order $2^{\ell}\cdot\ell!$ (in case $d=2^\ell$ is a power of two), which is significantly smaller than $d!$, the order of $S_d$.  

Furthermore, we explore the cyclic subgroup $\mathbb{Z}_d$ within $S_d$, generated by the element $g:=(23 \ldots n1)$. This generator cyclically shifts  $1$ to $2$, $2$ to $3$, and $n$ to $1$. Subsequently, we construct an operator $H_\chi$ whose action on $W$ commutes with $\mathbb{Z}_d$, and has the state $|\psi\rangle:= |\underbrace{-+\ldots +}_{n\ell} \rangle$ as its unique ground state. Notably, the Hilbert space $W =\underset{j=0}{\overset{d-1}{\bigoplus}} W_j$ decomposes into a direct sum of equidimensional vector spaces decomposes into a direct sum of equidimensional vector spaces with respect to the $\mathbb{Z}_d$-action, with $|\psi\rangle$ situated in the subspace $W_{d/2}$. Moreover, the images of $|\psi\rangle$ during the execution of the QAOA with $H_\chi$ in place of the mixer Hamiltonian remain within this subspace until the final projection (see Appendix $A$ for precise results). To the best of our knowledge, this is the first example of a QAOA algorithm realized entirely (with the exception of the final measurement) within a nontrivial representation of a symmetry group of the objective function. 

We proceed by evaluating the effectiveness of simulations of three QAOA variants employing distinct mixer Hamiltonians: the conventional $B=\sum X_i$ and the newly proposed $H_M$ and $H_\chi$, applied to the edge coloring  and graph partitioning problems across a range of graphs. Both algorithms are configured iteratively with a depth parameter of $p=9$ for edge coloring and $p=7$ for graph partitioning, respectively. Through $50$ or more independent trials for each scenario, we observe statistically significant differences in mean values  at the $1.5\%$ significance level, with the new variant consistently demonstrating lower means. Moreover, we note considerably lower median and minimal values in the experiments utilizing the newly introduced mixer Hamiltonians compared to the classical one (see Section $6$ for details). 

Finally, we address an intriguing observation regarding the subpar performance of warm-start QAOA variants—a phenomenon recently documented in the literature. Warm-start strategies involve initiating QAOA from a promising classical solution generated by a classical algorithm, with the aim of further refining it through quantum optimization. While this approach has garnered a lot of attention in recent studies \cite{sridhar2023adapt,egger2021warm,okada2024systematic}, our investigation sheds light on its fundamental limitations.

In a recent study by  \cite{CFGRT}, extensive numerical experiments across a range of problem sizes and depths uncovered a significant finding. Notably, when QAOA initializes from a single warm-start string, it demonstrates minimal progress. We provide a conceptual elucidation for this observation. Specifically, we identify the absence of an operator satisfying the assumptions of the Perron-Frobenius theorem while also possessing a superposition of classical states with identical objective function value  as its ground states. This absence undermines the convergence guarantee of any warm-start QAOA variant to an optimal solution, even in the limit as the depth parameter approaches infinity ($p\to\infty$). 

Consequently, the convergence of warm-start QAOA variants to an optimal solution hinges entirely on the classical optimizer's ability to avoid being trapped in parameter sets, leading to local extrema of the objective function and raising a major question to a variety of warm-start heuristics that claim observing quantum advantage, namely, "is the advantage indeed quantum?"

\section{Overview of QAOA}

Let $\mathbb{D}^n:=\{0,1,\hdots,d-1\}^n$ be the set of $n$-element strings and $\mathcal{S}$ the group of  permutations of these $d^n$ elements. A classical optimization problem can be formulated as follows: given a function  $F:\mathbb{D}^n \rightarrow \mathbb{R}$, find the elements in $\mathbb{D}^n$ on which it attains min (max) values. If a permutation $g\in \mathcal{S}$ is undetectable by $F$, i.e. $F(g(x))=F(x)$ for any $x\in \mathbb{D}^n$, then $g$ is symmetry of $F$. Such elements form a subgroup $G\subset\mathcal{S}$ and $F$ is invariant with respect to this subgroup.

One of the widely employed algorithms for tackling the quantum version of the optimization problem  is the Quantum Approximate Optimization Algorithm, introduced in \cite{QAOA}. In the QAOA framework, the Hamiltonian $H_F$ is commonly referred to as the \textit{problem Hamiltonian} and is denoted by $H_P$ (as per Farhi's et al. paper \cite{QAOA}). We will adopt this notation consistently.

Central to QAOA is the mixer Hamiltonian $H_M$, characterized by a distinct lowest energy state $|\xi\rangle \in W$ and adherence to the requirements of the Perron-Frobenius theorem (refer to Theorem \ref{PF}). The core idea behind the QAOA algorithm lies in iteratively transforming the mixer Hamiltonian $H_M$ into the problem Hamiltonian. This process ensures that the image of the lowest-energy vector from the preceding step becomes the lowest-energy vector in the subsequent one.

The algorithm initiates by preparing the state $|\xi\rangle$, the ground state for the mixer Hamiltonian $H_M$, and then proceeds with multiple alternating applications of (certain exponents of) the problem and mixer Hamiltonians. The number of iterations is conventionally denoted by $p$ (also known as QAOA depth), and we use $\mathfrak{Q}_p$ to express the entire composition of operators
\begin{equation}\label{qaoa-chain}
\mathfrak{Q}_p:=e^{-i\beta_1 H_M}e^{-i\gamma_1 H_P}\hdots e^{-i\beta_p H_M}e^{-i\gamma_p H_P}.
\end{equation}

The final step of QAOA involves performing a measurement of the state obtained after applying $\mathfrak{Q}_p$ in the standard basis. For an in-depth description of the algorithm, we direct the reader to Section $2$ and the references therein.

While the Hamiltonian $H_P$, representing the objective function, is uniquely determined by the classical problem, there is some flexibility in choosing the pair of mixer Hamiltonian and initial state. The convergence of QAOA to a classical state representing an element on which $F$ attains a minimum value is guaranteed by the adiabatic theorem, provided the mixer Hamiltonian satisfies the conditions of the Perron-Frobenius theorem (see below and Theorem $8.4.4$ in \cite{HJ}) and the initial state is the ground state for it.

\begin{thm} (Perron-Frobenius). Let $M=(m_{ij})\in \mbox{Mat}_n(\mathbb{R})$ be an irreducible matrix with $m_{ij}\geq 0$.
\begin{itemize}
\item Then there is a positive real number $r$, such that $r$ is an eigenvalue of $M$ and any other eigenvalue $\lambda$ (possibly complex) has $\operatorname{Re}(\lambda)<r$. 
    \item Moreover, there exists a unique real vector  $v=(v_1,v_2,\ldots,v_n)$ such that $M(v)=rv$ and $v_1+v_2+\ldots+v_n=1$. This vector is positive, i.e. all $v_i$ are strictly greater than $0$. 
\end{itemize}

\label{PF}
\end{thm}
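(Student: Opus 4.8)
This is the classical Perron--Frobenius theorem, so the plan is to reassemble the standard argument, organized so that the two bullets fall out of three moves: first produce, by a fixed-point argument, a nonnegative eigenvector of $M$ whose eigenvalue $r$ is strictly positive; then exploit irreducibility to upgrade that eigenvector to a strictly positive one and to show $r$ dominates the rest of the spectrum; and finally deduce that the $r$-eigenspace is one-dimensional, which gives uniqueness after normalization. We may assume $n\geq 2$, since for $n=1$ irreducibility forces $M=(r)$ with $r>0$ and there is nothing to prove.

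\emph{Existence.} I would first record that an irreducible nonnegative $M$ has no zero row and no zero column, since either would make the associated directed graph fail to be strongly connected; hence $Mx$ is a nonzero nonnegative vector for every $x$ in the standard simplex $\Delta=\{x\geq 0:\sum_i x_i=1\}$ (nonzero because $x$ has a positive coordinate and the corresponding column of $M$ does not vanish), so $\Phi(x):=Mx/\lVert Mx\rVert_1$ is a continuous self-map of $\Delta$. Brouwer's fixed point theorem then gives $v\in\Delta$ with $Mv=rv$ and $r=\lVert Mv\rVert_1>0$. Running the same construction on $M^{T}$ (again irreducible and nonnegative) yields a nonnegative left eigenvector $u$ with $u^{T}M=r'u^{T}$, $r'>0$; pairing the two relations gives $r(u^{T}v)=r'(u^{T}v)$, so $r=r'$ as soon as $u^{T}v>0$, which I establish next.

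\emph{Positivity and dominance.} Here I would invoke the combinatorial fact that for irreducible nonnegative $M$ the matrix $(I+M)^{n-1}$ has all entries strictly positive, because its $(i,j)$ entry is positive whenever there is a walk of length at most $n-1$ from $i$ to $j$, and such a walk exists by strong connectivity. Applying $(I+M)^{n-1}$ to $Mv=rv$ gives $(1+r)^{n-1}v=(I+M)^{n-1}v$, a strictly positive matrix applied to a nonzero nonnegative vector, forcing $v>0$; the same reasoning gives $u>0$, hence $u^{T}v>0$ and $r'=r$. For dominance, let $\lambda\in\mathbb{C}$ be any eigenvalue of $M$ with eigenvector $w\neq 0$; since $M\geq 0$, the triangle inequality gives $M|w|\geq |Mw|=|\lambda|\,|w|$ entrywise, and multiplying this by $u^{T}>0$ and using $u^{T}M=ru^{T}$ yields $r(u^{T}|w|)\geq|\lambda|(u^{T}|w|)$ with $u^{T}|w|>0$, so $|\lambda|\leq r$. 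Then $\operatorname{Re}\lambda\leq|\lambda|\leq r$, with equality forcing $\lambda=r$; this is exactly the claimed gap in real part.

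\emph{Uniqueness.} Given any real $w$ with $Mw=rw$, I would set $t^{*}=\max_i w_i/v_i$, so that $w-t^{*}v$ again lies in the $r$-eigenspace, is coordinatewise $\geq 0$, and has a vanishing coordinate; by the positivity step a nonzero $r$-eigenvector is strictly positive, so $w-t^{*}v=0$, and the $r$-eigenspace is a line, whose unique element with coordinate sum $1$ is $v/\sum_i v_i$, positive because $v$ is. The hard part is the existence step: it cannot be done by pure linear algebra and genuinely needs a topological input — Brouwer's theorem as above, or equivalently the Collatz--Wielandt characterization $r=\sup_{x\geq 0,\,x\neq 0}\;\min_{i:\,x_i>0}(Mx)_i/x_i$ together with a compactness argument showing the supremum is attained at an eigenvector; once existence is in hand, everything else rests on the single lever that $(I+M)^{n-1}$ is entrywise positive.
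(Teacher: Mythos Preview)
The paper does not actually prove this theorem: it is quoted as background, with a pointer to Theorem~8.4.4 in Horn--Johnson, and then used as a black box. So there is no in-paper proof to compare against, and supplying the classical argument yourself is the appropriate thing to do. Your Brouwer fixed-point step for existence, the $(I+M)^{n-1}$ trick for strict positivity, the left-eigenvector pairing $u^{T}M|w|\ge|\lambda|\,u^{T}|w|$ for $|\lambda|\le r$, and the one-line deduction $\operatorname{Re}\lambda\le|\lambda|\le r$ with equality only at $\lambda=r$ are exactly the standard moves and are put together correctly.

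One small slip in the uniqueness paragraph: with $t^{*}=\max_i w_i/v_i$ it is $t^{*}v-w$, not $w-t^{*}v$, that is coordinatewise $\ge 0$ with a vanishing coordinate; the rest of the argument goes through unchanged once the sign is flipped (or, equivalently, take $t^{*}=\min_i w_i/v_i$ and keep $w-t^{*}v$). Also, the phrase ``a nonzero $r$-eigenvector is strictly positive'' should read ``a nonzero \emph{nonnegative} $r$-eigenvector is strictly positive''; that is all the positivity step gives, and it is all you need here since $t^{*}v-w\ge 0$.
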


The standard and most common choice of mixer Hamiltonian involves Pauli $X$-gates and is given by $B=\sum\limits_{0\leq j \leq \ell-1} X_j$, where $\ell$ is the number of qubits needed for the (re)formulation of the original problem. The corresponding ground state is $|\xi\rangle=|+\rangle^{\otimes l}$. While this choice offers certain advantages, it does not consider any specific attributes of a given problem, in particular, the group of symmetries $G$.

\section{Symmetries of the Mixers}


In this section, we offer a broad, high-level overview of our approach to selecting the mixer Hamiltonian based on symmetries inherent in the objective function of the optimization problem being addressed. A more comprehensive and conceptual discussion is deferred to the appendix.

When determining the symmetries, it is natural to start by considering the group $\mathcal{S}$ consisting of all permutations of the elements within the set of all $d$-element strings $\mathbb{D}^n$. This action naturally extends to an action on classical states, and by linearity, to the vector space $W$ associated with $\mathbb{D}^n$. The group of classical symmetries for an optimization problem forms a subgroup $G$ comprising elements $g \in \mathcal{S}$ that remain 'undetectable' by $F$, meaning that $F(g(x)) = F(x)$ for any $x \in \mathbb{D}^n$. It is straightforward to observe that elements in this subgroup commute with the action of the problem Hamiltonian (representing $F$) on $W$. It is natural to seek a mixer Hamiltonian that satisfies the necessary technical requirements of the Perron-Frobenius theorem (see Theorem \ref{PF}), ensuring convergence as $p \rightarrow \infty$ and commuting with the largest subgroup of $G$, ideally encompassing the entire group $G$. 
Given that the latter condition implies that the corresponding unitary operator $\mathfrak{Q}_p$, which is the product of $p$ alternating applications of mixer and problem Hamiltonian operators, commutes with $G$, it is natural to refer to the corresponding QAOA as $G$-equivariant.

Within $\mathcal{S}$, there exists a subgroup $S_d = \text{\textit{Perm}}(\mathbb{D})$, comprising permutations of elements within a single copy of the symbol set $\mathbb{D}$. This subgroup acts by simultaneously permuting elements of $\mathbb{D}^n$ in the same manner across all copies:

$$g(d_1,d_2,\ldots,d_n):=(g\cdot d_1,g\cdot d_2,\ldots,g\cdot d_n). $$

In many optimization problems (as discussed in the following sections), the objective function contains $S_d$ as a subgroup of its classical symmetries, i.e., $S_d \subseteq G$.

For simplicity of exposition and to facilitate future practical implementations, we will assume that the number of elements, denoted by $d$, is a power of two, i.e., $d = 2^{\ell}$. In this case, two subgroups of $S_d$ will play a fundamental role in our discussion. To describe them, it is convenient to consider the set $\mathbb{D}$ as a union of $\ell$ bits.

The group $K_\ell := \underbrace{\mathbb{Z}_2 \times \ldots \times \mathbb{Z}_2}_{\ell}$, which is a subgroup of $\mathcal{S}$, represents the bit flips for each of these bits. Meanwhile, $S_{\ell} \subset S_{d}$ is the subgroup responsible for permuting the bits.

In the appendix, we elaborate on the construction (and the reasoning behind it) of a mixer Hamiltonian $H_M$, whose action on $W$ (the vector space corresponding to $\mathbb{D}^n$) commutes with the action of the entire group $S_d$. Importantly, $H_M$ satisfies the assumptions of the Perron-Frobenius theorem. 

 Similar to the classical mixer Hamiltonian $B$, the operator $H_M$ has a uniform superposition of all classical states \[|\xi\rangle = \frac{1}{2^{n\ell }}H^{\otimes {n\ell} }(|\underbrace{00\ldots 0}_{n\ell} \rangle) = |\underbrace{++\ldots +}_{n\ell} \rangle\] as its unique ground state. However, we also highlight a significant difference between the two operators, $B$ and $H_M$. Specifically, the action of the classical mixer Hamiltonian $B$ only commutes with a smaller subgroup, which is the semidirect product of the groups $K_\ell$ and $S_\ell$, and has an order of $2^\ell \cdot \ell!$. This is notably less than the order of $S_d$, which is $d! = 2^\ell !$, as demonstrated in Proposition \ref{Centralizers} and the subsequent Corollary \ref{GroupCor}. 

We proceed by examining the cyclic subgroup $\mathbb{Z}_d$ within $S_d$, generated by the element $g:=(23 \ldots n1)$, which cyclically shifts the elements from $1$ to $2$, $2$ to $3$, and $n$ to $1$. We then construct an operator $H_\chi$ whose action on $W$ commutes with $\mathbb{Z}_d$ and has the state $|\psi\rangle:= \frac{1}{2^{n\ell }}H^{\otimes n\ell}(|\underbrace{10\ldots 0}_{n\ell}\rangle) = |\underbrace{-+\ldots +}_{n\ell} \rangle$ as its unique ground state.

\begin{rmk}
The ambient Hilbert space $W$ admits a decomposition into a direct sum of subspaces:

\[ W =\underset{j=0}{\overset{d-1}{\bigoplus}} W_j \]
according to the $\mathbb{Z}_d$-action. It is interesting to note that the state vector $|\xi\rangle$ resides in $W_0$, while $|\psi\rangle$ is located in $W_{d/2}$. Moreover, the images of these vectors during the execution of their respective QAOAs remain within these subspaces prior to the final projection (see Remark \ref{subspacermk} for a precise statement).

\end{rmk}

Let us reiterate that we defer the verification of the existence of the operators $H_M$ and $H_\chi$ satisfying the aforementioned properties to the appendix. Instead, our focus in the subsequent sections will be on demonstrating its practical advantages over the classical mixer.

\section{Outline of the Two Problems}

In this section, we elucidate two significant classical optimization problems and their reformulations within the framework of QAOA. These problems  find numerous applications across various domains \cite{bulucc2016recent,jensen2011graph}.

\subsection{Problem 1: Edge Coloring}

One class of optimization problems with objective function having the aforementioned group of symmetries, $S_d$, is coloring of the vertices or edges of a graph in $d$ colors. 

\begin{defn}
A \textbf{vertex coloring} of a graph $\Gamma = (V, E)$ is a map $\widetilde{C} : E \rightarrow \mathfrak{C}$, where $\mathfrak{C}$ is a set of colors with $|\mathfrak{C}|=d$. A coloring $\widetilde{C}$ is called \textbf{proper} if  $\widetilde{C}(v_1)\neq \widetilde{C}(v_2)$ for any two adjacent vertices $v_1, v_2 \in V$.

Similarly, an \textbf{edge coloring} of a graph $\Gamma = (V, E)$ is a map $C : E \rightarrow \mathfrak{C}$. A coloring $C$ is called \textbf{proper} if  $C(e)\neq C(f)$ for any two adjacent edges $e, f \in E$.
\end{defn}

To represent $k$ colors, we employ $\ell=\ell og_2(d)$ bits through the following encoding:

\begin{align*}
& color_0\longleftrightarrow 0\ldots 00\\
& color_1\longleftrightarrow 0\ldots 01\\
& \ldots\\
\end{align*}

In this section we focus on the edge coloring. Each edge $e\in E$ is assigned $\ell$ bits $e_0, e_1,\ldots, e_{\ell-1}$ whose values uniquely determine the color of the edge. The characteristic function of a color \(C \in \mathfrak{C}\) is defined as follows:

\[ \chi_c(c') :=
\begin{cases} 
1, & \text{if } c'_i \equiv c_i \text{ for all } i \in \{1, \ldots, \ell\} \\
0, & \text{otherwise}
\end{cases}
\]

This function, denoted as \(\chi_c(C(e))\), is explicitly given by

\[ \chi_c(C(e)) = \prod\limits_{i=1}^{\ell}((1-c_i)e_i+c_i(1-e_i)) \]

This defining property ensures that the characteristic function equals $1$ on the specific color \(C\) and $0$ on all other colors. 
We define the objective function
\[ F_{\Gamma}(C):=\sum\limits_{e\bullet f}\sum\limits_{c\in \mathfrak{C}}\chi_c(C(e))\chi_c(C(f)), \]
where the notation $e \bullet f$ represents adjacent edges. This function calculates the number of adjacent edges with the same color.

\begin{rmk}
A coloring $C$ is proper if and only if $F_{\Gamma}(C)=0$.
\end{rmk}

It is evident that the action of the group $S_d$, permuting the colors, preserves the values of the objective function:
\[ F_{\Gamma}(\sigma^{-1} (C))=F_{\Gamma}(C) \quad \forall \sigma\in S_d, C\in \mathfrak{C}. \]

\begin{defn}
The \textbf{chromatic index} $\chi_\Gamma$ of a graph $\Gamma$ is the minimum number of colors needed for a proper coloring of $\Gamma$.
\end{defn}

The following result was proved in \cite{VIZ}.
\begin{thm}
Let  $\Gamma$ be a simple undirected graph with maximum degree $\triangle(\Gamma)$. Then $\triangle(\Gamma)\leq\chi(G)\leq \triangle(\Gamma)+1$.
\label{Vizing}
\end{thm}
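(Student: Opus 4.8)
The lower bound $\triangle(\Gamma)\le\chi_\Gamma$ is immediate: a vertex of degree $\triangle(\Gamma)$ meets $\triangle(\Gamma)$ pairwise adjacent edges, and any proper edge coloring must assign them distinct colors. The substance of the theorem is the upper bound $\chi_\Gamma\le\triangle(\Gamma)+1$, which I would establish by induction on the number of edges $|E|$, the edgeless graph being trivial.

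For the inductive step, fix a palette $\mathfrak{C}$ of $\triangle(\Gamma)+1$ colors, choose an edge $e_0=uv_0$, and use the inductive hypothesis to properly color $\Gamma-e_0$ from $\mathfrak{C}$ (the deleted-edge graph has smaller maximum degree, so the same palette suffices). Since every vertex $w$ is incident to at most $\triangle(\Gamma)=|\mathfrak{C}|-1$ colored edges, at least one color is \emph{missing} at $w$. The plan is to perform a bounded sequence of local recolorings of $\Gamma-e_0$ after which some single color is missing simultaneously at $u$ and at $v_0$; assigning that color to $e_0$ then completes the coloring and closes the induction.

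The local recolorings are organized around a \textbf{Vizing fan} at $u$: a maximal sequence of distinct neighbors $v_0,v_1,\ldots,v_k$ of $u$ in which, for each $i<k$, the color of the edge $uv_{i+1}$ is one of the colors missing at $v_i$. If some color is missing at $u$ and also at one of the fan vertices $v_j$, one \emph{rotates} the fan — successively recoloring $uv_i$ with the former color of $uv_{i+1}$ for $i<j$ — which leaves $uv_j$ uncolored while keeping the coloring proper, and then assigns that shared missing color to $uv_j$. Otherwise, pick a color $\alpha$ missing at $u$ and a color $\beta$ missing at $v_k$, and look at the \textbf{Kempe chain} through $u$, i.e.\ the connected component of $u$ in the spanning subgraph of edges colored $\alpha$ or $\beta$; this component is a simple path, and swapping $\alpha$ and $\beta$ along it produces another proper coloring which, after at most one further fan rotation, falls into the first case.

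The main obstacle is exactly this case analysis: one must verify that fan rotations preserve properness, that the $\alpha$--$\beta$ subgraph through $u$ is genuinely a path so the Kempe swap is well defined, and that after the swap one lands in the solvable configuration rather than cycling — all of which rests on careful bookkeeping of which colors are missing at which fan vertices, together with the fact that maximality of the fan constrains where the colors on the edges $uv_{i+1}$ can recur. The inductive skeleton and the lower bound are routine; the entire difficulty is concentrated in this fan-and-Kempe-chain argument, which is the heart of Vizing's original proof in \cite{VIZ}.
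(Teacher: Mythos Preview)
The paper does not actually prove this theorem: it simply states the result and attributes it to Vizing via the citation \cite{VIZ}, so there is no in-paper argument to compare against. Your outline is the classical Vizing-fan-plus-Kempe-chain proof from that reference and is structurally sound.

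One small slip worth flagging: in the inductive step you write that ``the deleted-edge graph has smaller maximum degree.'' Deleting a single edge need not lower the maximum degree (any other vertex of maximum degree is unaffected); what you actually need, and what is true, is $\triangle(\Gamma-e_0)\le\triangle(\Gamma)$, so that the inductive hypothesis gives a proper edge coloring of $\Gamma-e_0$ with $\triangle(\Gamma-e_0)+1\le\triangle(\Gamma)+1$ colors and the fixed palette $\mathfrak{C}$ suffices. With that correction the skeleton is right, and you are also right that the real work---checking fan rotations stay proper, that the $\alpha$--$\beta$ component through $u$ is a path, and that the Kempe swap terminates the case analysis rather than looping---is exactly the content of \cite{VIZ} that the paper elects to cite rather than reproduce.
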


\begin{defn}
Graphs that can be colored with $\triangle(\Gamma)$ are called \textbf{class one}
graphs. Graphs that require at least $\triangle(\Gamma)+1$ colors are called \textbf{class two} graphs.
\end{defn}

In order to resolve the dichotomy in Theorem \ref{Vizing}, whether the minimal proper coloring of edges involves $k$ or $k+1$ colors, it suffices to find out if a proper $k$ coloring exists.

The operator representing the characteristic function $\chi_c$ is given by

\[
\widetilde{\chi}_c(e) := 
\begin{cases} 
    e, & e_i \equiv c_i ~ \forall i \in \{1,\ldots, \ell\} \\
    0, & \text{otherwise}
\end{cases}
\]

and is expressed as 
\[
\widetilde{\chi}_c(e) = \frac{1}{2^\ell}\bigotimes\limits_{i=1}^{\ell}(\mathds{1} + (-1)^{c_i}Z_{e,i}),
\]

In the case $\ell=2$, this expression simplifies to 
\[
Z_{e,0}Z_{f,0}Z_{e,1}Z_{f,1} + Z_{e,0}Z_{f,0} + Z_{e,1}Z_{f,1} + \lambda\mathds{1}.
\]

Meanwhile, the problem Hamiltonian representing  $F_{\Gamma}$ is 
\[
H_P = \sum\limits_{e\bullet f}\sum\limits_{c\in \mathfrak{C}}\widetilde{\chi}_c(e)\widetilde{\chi}_c(f).
\]

The building blocks for the quantum circuit representing the exponent of the latter operator,

\[
e^{-i\beta H_P} = \prod\limits_{e\bullet f}e^{-i\beta(Z_{e,0}Z_{f,0}Z_{e,1}Z_{f,1} + Z_{e,0}Z_{f,0} + Z_{e,1}Z_{f,1})} = \prod\limits_{e\bullet f}(e^{-i\beta Z_{e,0}Z_{f,0}Z_{e,1}Z_{f,1}}e^{-i\beta Z_{e,0}Z_{f,0}}e^{-i\beta Z_{e,1}Z_{f,1}})
\]

are presented in Figure \ref{ZZZZcircuit} below.

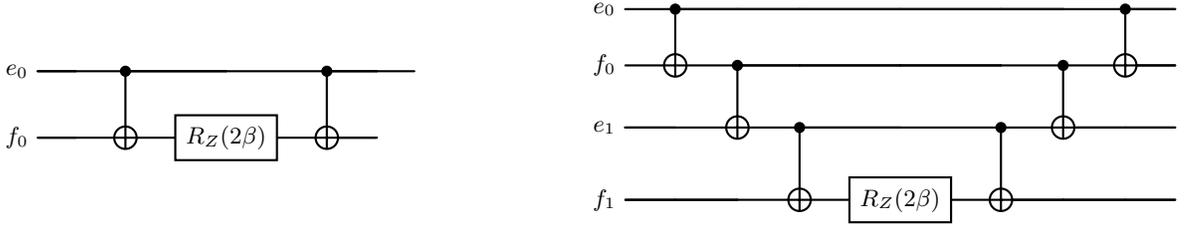
\begin{figure}[htbp!]
  \begin{minipage}{0.5\linewidth}
    \centering
    \begin{quantikz}
      e_0\hspace{0.05in} & \qw & \ctrl{1} & \qw & \ctrl{1} & \qw &  \\
      f_0\hspace{0.05in} & \qw	 & \targ{} & \gate{R_{Z}(2\beta)} & \targ{} & \qw\\
    \end{quantikz}
  \end{minipage}%
  \begin{minipage}{0.5\linewidth}
    \centering
    \begin{quantikz}
      e_0\hspace{0.05in} & \ctrl{1}	& \qw  & \qw & \qw &  \qw & \qw & \ctrl{1} & \qw \\
      f_0\hspace{0.05in} & \targ{}	& \ctrl{1}   & \qw & \qw & \qw & \ctrl{1} & \targ{} & \qw\\
      e_1\hspace{0.05in} & \qw	& \targ{}  & \ctrl{1} & \qw & \ctrl{1} & \targ{} & \qw & \qw \\
      f_1\hspace{0.05in} & \qw	& \qw & \targ{} & \gate{R_{Z}(2\beta)} & \targ{} & \qw & \qw & \qw\\
    \end{quantikz}
  \end{minipage}
   \caption{Quantum circuits for $e^{-i\beta Z_{e,0}Z_{f,0}}$ and $e^{-i\beta Z_{e,0}Z_{f,0}Z_{e,1}Z_{f,1}}$}
    \label{ZZZZcircuit}
\end{figure}

\subsection{Problem 2: graph partitioning}

The second optimization problem explored in this paper is the balanced graph partitioning problem. This problem appears in numerous applications \cite{bulucc2016recent} and has been a subject of several investigations in QAOA and other frameworks \cite{ushijima2021multilevel,shaydulin2019multistart}. 
Given a graph $\Gamma$ and a fixed integer $k$ that divides the number of vertices in $\Gamma$, the objective is to find a partition of the vertices: $V = V_0 \sqcup V_1 \sqcup \hdots \sqcup V_{k-1}$ into $k$ disjoint subsets of equal cardinality that minimizes the total number of cut edges. A cut edge is defined as an edge with endpoints in different subsets. The requirement of exact equality of sizes of $V_i$'s for all $i$ is often referred to as perfectly balanced graph partitioning.

This problem bears some resemblance to the vertex coloring problem for $k$ colors. Specifically, we can refer to vertices in subset $V_i$ as colored with the $i$-th color. However, unlike the coloring problem where we aim to minimize the number of adjacent vertices with the same color, here we seek to maximize this number. Additionally, we must account for the restriction on the cardinalities of the $V_i$'s. 

We will examine examples with $k=4$ and the number of vertices in the graph being a multiple of $4$. As before, we encode the $4$ colors using $2$ bits. We define the objective function $F(C)$ as follows: 
\[ F(C) = -\sum\limits_{v-v'} \sum\limits_{c \in \mathfrak{C}} \chi_c(C(v)) \chi_c(C(v'))+ \left(2E\sum\limits_{v\in V} (v_0-0.5)\right)^2 + \left(2E\sum\limits_{v\in V} (1-v_0)(v_1-0.5)\right)^2 + \left(2E\sum\limits_{v\in V} v_0(v_1-0.5)\right)^2, \]
where the notation $v-v'$ is used for adjacent vertices. The first sum evaluates the number of pairs of adjacent vertices belonging to different subsets of the partition, while the remaining three ensure that $|V_0|=|V_1|=|V_2|=|V_3|=\frac{|V|}{4}$. Specifically, $\left(\sum\limits_{v\in V} (v_0-0.5)\right)^2$ equals zero if and only if the numbers of vertices with the first color bit equal to $0$ and $1$ coincide; otherwise, it is positive. Similarly, $\left(\sum\limits_{v\in V} (1-v_0)(v_1-0.5)\right)^2$ and $\left(\sum\limits_{v\in V} v_0(v_1-0.5)\right)^2$ equal zero if and only if the numbers of vertices with the second color bit equal to $0$ and $1$ coincide, respectively, for the first color bit being fixed at $0$ and $1$.

The corresponding problem Hamiltonian is given by:
\[ H_P = -\sum\limits_{v-v'} \sum\limits_{c \in \mathfrak{C}} \widetilde{\chi}_v(c) \widetilde{\chi}_{v'}(c) + E\left(\sum\limits_{v\in V} Z_{v,0}\right)^2 + E\left(\sum\limits_{v\in V} (1-Z_{v,0})Z_{v,1}\right)^2 + E\left(\sum\limits_{v\in V} (1-Z_{v,0})Z_{v,1}\right)^2 \]

and is equivalent to 

\[-\sum\limits_{v-v'} \sum\limits_{c \in \mathfrak{C}} \widetilde{\chi}_v(c) \widetilde{\chi}_{v'}(c) + 2E\sum\limits_{v,v'\in V} (Z_{v,0}Z_{v',0}+Z_{v,1}Z_{v',1})+ 2E\sum\limits_{v,v'\in V} Z_{v,0}Z_{v',0}Z_{v,1}Z_{v',1}.\]

\section{The Duel: equivariant $H_M, H_\chi$ Vs classical $B$}

In this section, we contrast the performance of QAOA algorithms using different mixer Hamiltonians: the classical one, $B=\sum X_i$, and the newly introduced equivariant $H_M$ and $H_\chi$. We analyze their effectiveness on the problems discussed in the preceding section, primarily comparing $H_M$ and $H_\chi$ with $H_B$.
We implement the  algorithms iteratively. The algorithms begin by establishing the initial state:

\[
|\xi\rangle = \frac{1}{2^{n\ell }}H^{\otimes {n\ell} }(|\underbrace{00\ldots 0}_{n\ell} \rangle) = |\underbrace{++\ldots +}_{n\ell} \rangle
\]

for \(H_M\) and \(H_B\), or 

\[
|\psi\rangle = \frac{1}{2^{n\ell }}H^{\otimes n\ell}(|\underbrace{10\ldots 0}_{n\ell}\rangle) = |\underbrace{-+\ldots +}_{n\ell} \rangle
\]
for \(H_\chi\). The initial pair of parameters \((\beta_1,\gamma_1)\) is randomly selected from the uniform distribution on the set \([0,0.25\pi]\times[0,2\pi]\). Subsequently, the algorithm iterates through runs: after completing the $p=1$ run, optimal values $(\beta^*_1,\gamma^*_1)$ are determined with the aid of a classical optimizer. The subsequent QAOA run is then executed with starting parameters $(\beta^*_1,\gamma^*_1,0,0)$ for $p=2$, and this process continues iteratively. The objective of the classical optimizer is to minimize the \textit{energy}, which is defined as the average value of the objective function on the states output by the algorithm over multiple runs:

\begin{equation}
\mathcal{E}_p:=\dfrac{\sum\limits_{i=1}^m F_{\Gamma}(\mathfrak{Q}_p(|s_i\rangle))}{m}.\label{eq:energy}
\end{equation}

\begin{rmk}
   In case of the edge coloring problem, if the energy  $\mathcal{E}_p < 1$, it implies that at least one of the obtained values $F_{\Gamma}(\mathfrak{Q}_p(|s_i\rangle))$ is zero. Consequently, the corresponding coloring is proper, indicating that $\Gamma$ is a class one graph.
\end{rmk}

On each successive step, the starting parameters consist of the values converged by the classical optimizer on the preceding step, complemented by two zeros for the additional angles that did not appear in the previous step. This deliberate choice ensures that the energy values $\mathcal{E}_1,\mathcal{E}_2,\ldots$ obtained in subsequent steps are nonincreasing, as outlined in \cite{QAOA}.  The algorithm's depth for the edge coloring problem was set at $p=9$ and for the graph partitioning problem at $p=7$. We conducted multiple independent simulations, ranging from $50$ to $56$, for various graphs (see  Figures \ref{graphscoloring} and \ref{graphspartitioning}) using the qiskit codes available at \href{https://github.com/BorisTsv/QAOA-Mixer-Hamiltonians-for-Optimization-Problems-with-S_d-Symmetries/tree/main}{this link}. The main characteristics of the outcomes are summarized in Table \ref{Table1}, while the histograms displaying the average $\mathcal{E}_p$-values across sample runs of the equivariant algorithms, as compared to the classical one for each graph, are depicted in Figures \ref{Histograms1} and \ref{Histograms2}.

Based on  Table \ref{Table2}, which presents the Student's $t$-test values for testing the hypothesis that the means of energy values for the two algorithms are equal, we reject this hypothesis at a significance level of $\alpha=1.5\%$ for all graphs analyzed (with the exception of graph $ \Gamma_2$ for the mixer $H_\chi$). This indicates a statistically significant difference in the energy values between the algorithms across all examined graphs. Furthermore, we consistently observe lower median and minimal  energy values (columns $3$ and $4$) for the newly proposed mixers.

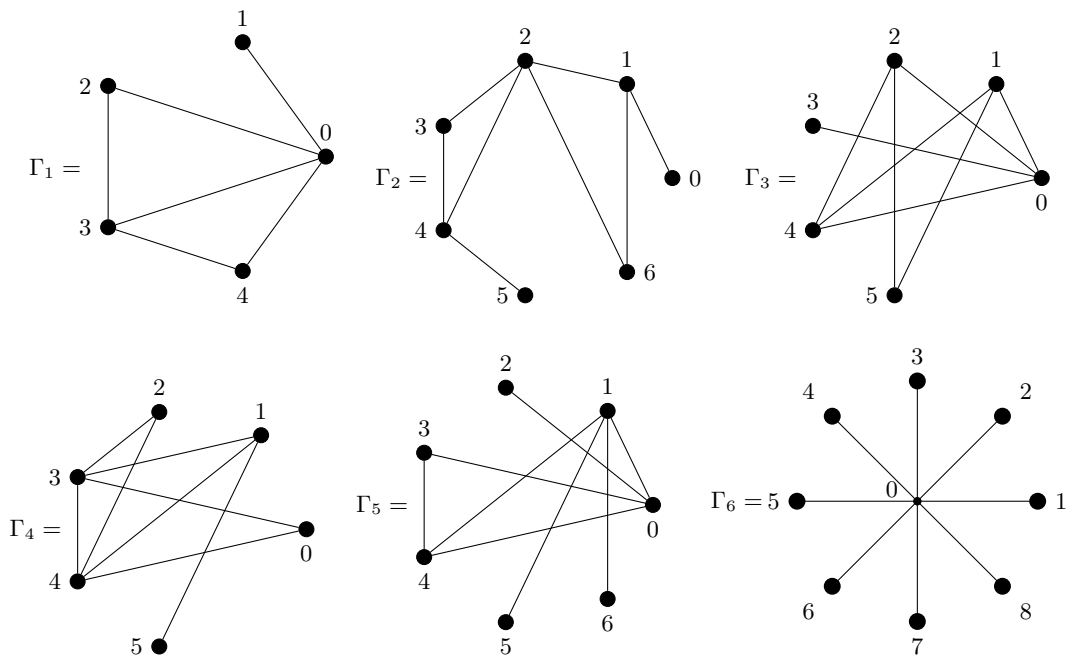
\begin{figure}[htbp]
\begin{center}
\begin{tikzpicture}[scale=0.8]
  \foreach \i in {0,1} {
    \node[circle, draw, fill=black, inner sep=2pt, label={above:$\i$}] (\i) at (\i*360/5:2) {};
  }
  \foreach \i in {2,3} {
    \node[circle, draw, fill=black, inner sep=2pt, label={left:$\i$}] (\i) at (\i*360/5:2) {};
  }
  \foreach \i in {4} {
    \node[circle, draw, fill=black, inner sep=2pt, label={below:$\i$}] (\i) at (\i*360/5:2) {};
  }

  \foreach \u/\v in {0/1, 0/2, 0/3, 0/4, 2/3, 3/4} {
    \draw (\u) -- (\v);
  }
  \node at (-2.5,-0.2) {$\Gamma_1=$};
\end{tikzpicture}\hspace{0.1in}
\begin{tikzpicture}[scale=0.8]
  \foreach \i/\pos in {0/right, 1/above, 2/above, 3/left, 4/left, 5/left, 6/right}
    \node[circle, draw, fill=black, inner sep=2pt, label={\pos:$\i$}] (\i) at (\i*360/7:2) {};
  
  \foreach \x/\y in {0/1, 1/2, 2/3, 3/4, 4/5, 1/6, 2/6, 2/4}
    \draw (\x) -- (\y);
    \node at (-2.5,0) {$\Gamma_2=$};
\end{tikzpicture}\hspace{0.1in}
\begin{tikzpicture}[scale=0.8]
  \foreach \i/\pos in {0/below, 1/above, 2/above, 3/above, 4/left, 5/left}
    \node[circle, draw, fill=black, inner sep=2pt, label={\pos:$\i$}] (\i) at (\i*360/7:2) {};
  
  \foreach \x/\y in {0/1, 0/2, 0/3, 0/4, 1/5, 1/4, 2/5, 2/4}
    \draw (\x) -- (\y);
\node at (-2.5,0) {$\Gamma_3=$};
\end{tikzpicture}
\end{center}

\begin{center}

\begin{tikzpicture}[scale=0.8]
  \foreach \i/\pos in {0/below, 1/above, 2/above, 3/left, 4/left, 5/left}
    \node[circle, draw, fill=black, inner sep=2pt, label={\pos:$\i$}] (\i) at (\i*360/7:2) {};
  
  \foreach \x/\y in {3/1, 3/2, 0/3, 0/4, 1/5, 1/4, 3/4, 2/4}
    \draw (\x) -- (\y);
        \node at (-2.5,0) {$\Gamma_4=$};
\end{tikzpicture}\hspace{0.1in}
  \begin{tikzpicture}[scale=0.8]
  \foreach \i in {1,2,3}
    \node[circle, draw, fill=black, inner sep=2pt, label={above:$\i$}] (\i) at (\i*360/7:2) {};
  
  \foreach \i/\pos in {0/below, 4/below, 5/below, 6/below}
    \node[circle, draw, fill=black, inner sep=2pt, label={below:$\i$}] (\i) at (\i*360/7:2) {};
  
  \foreach \x/\y in {0/1, 0/2, 0/3, 0/4, 1/5, 1/6, 1/4, 3/4}
    \draw (\x) -- (\y);
    \node at (-2.5,0) {$\Gamma_5=$};
\end{tikzpicture}  \hspace{0.1in}
\begin{tikzpicture}[scale=0.8]
    \foreach \k/\ang in {1/0, 2/45, 3/90, 4/135, 5/180, 6/225, 7/270, 8/315} {
    \fill (\ang:2) circle (4pt) node[label={\ang:$\k$}] {};
    \draw (0,0) -- (\ang:2);
  }
  \fill (0,0) circle (2pt);
  \node at (-0.42,0.2) {$0$};
    \node at (-3,0) {$\Gamma_6=$};
\end{tikzpicture}
 \caption{Graphs $\Gamma_1,\Gamma_2,\Gamma_3,\Gamma_4,\Gamma_5$ and $\Gamma_6$ considered for edge coloring problem}
    \label{graphscoloring}
\end{center}
\end{figure}

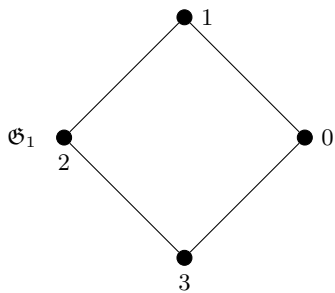
\begin{figure}[htbp]
\begin{center}
\begin{tikzpicture}[scale=0.8]
  \foreach \i in {0,1} {
    \node[circle, draw, fill=black, inner sep=2pt, label={right:$\i$}] (\i) at (\i*90:2) {};
  }
  \foreach \i in {2,3} {
    \node[circle, draw, fill=black, inner sep=2pt, label={below:$\i$}] (\i) at (\i*90:2) {};
  }

  \foreach \u/\v in {0/1, 1/2, 2/3, 3/0} {
    \draw (\u) -- (\v);
  }
  \node at (-2.7,0) {$\mathfrak{G}_1$};
\end{tikzpicture}
\end{center}
\caption{Graph $\mathfrak{G}$  considered for graph partitioning problem}
    \label{graphspartitioning}
\end{figure}
\newpage

\begin{table}[htbp]
  \centering
  \resizebox{0.45\textwidth}{!}{\begin{tabular}{|c|c|c|c|c|}
    \hline
    Graph & Mean & Median & Min & $\mathcal{E}_9<1$  \\
    \hline
    $\Gamma_1, B$ & 0.726 & 0.7056 & 0.3584&  41/50  \\
     \hline
    $\Gamma_1, H_M$ & 0.5692 & 0.4673 & 0.1923&47/50 \\
    \hline
    $\Gamma_1, H_\chi$ & 0.5726 & 0.5142 & 0.1621 &47/50 \\
    \hline
    $\Gamma_2, B$ & 0.9696  &0.9316  & 0.4814 & 33/56 \\
   \hline
   $\Gamma_2, H_M$  &0.7437  &0.7388  &0.3691  & 51/56  \\
    \hline
    $\Gamma_2, H_\chi$  & 0.8688  &0.7148  &0.3964 &47/56  \\
    \hline
    $\Gamma_3, B$ & 1.2495 & 1.2417 & 0.6533 & 11/56  \\
  \hline
    $\Gamma_3, H_M$ & 0.9344 & 0.8857 & 0.3691 & 35/56 \\
    \hline
     $\Gamma_3, H_\chi$ &0.7334 &0.6763 &0.2598 &50/56  \\
    \hline
      $\Gamma_4, B$ & 1.4857 & 1.5313 & 0.7382 & 6/56\\
     \hline
    $\Gamma_4, H_M$ & 1.1959 & 1.1074 & 0.5117 & 21/56\\
    \hline
    $\Gamma_4, H_\chi$ & 1.2415 & 1.1489 & 0.4395 &20/56 \\
    \hline
    $\Gamma_5, B$ & 1.3469 & 1.3066 & 0.6162 & 14/50\\
     \hline
    $\Gamma_5, H_M$ & 0.9149 & 0.9507 & 0.3516 & 30/50\\
    \hline
    $\Gamma_5, H_{\chi}$ & 0.94123 & 0.9375 & 0.2939& 27/50\\
    \hline
     $\Gamma_6, B$ & 0.8726 & 0.8569 & 0.502&23/28  \\
\hline
    $\Gamma_6, H_M$ & 0.5227 & 0.5073 & 0.17& 28/28 \\
    \hline

  \end{tabular}}
  \caption{QAOA performance comparison for edge coloring problem}
  \label{Table1}
\end{table}

\begin{table}[htbp]
  \centering
  \resizebox{0.4\textwidth}{!}{\begin{tabular}{|c|c|c|c|}
    \hline
    Graph & Mean & Median & Min   \\
    \hline
    $\mathfrak{G}, B$ & 10.08135 & 10.34229 & 6.24414   \\
     \hline
    $\mathfrak{G}, H_M$ & 8.05236 & 8.11035 & 4.47656 \\
    \hline
    $\mathfrak{G}, H_\chi$ &9.039  &8.888  &4.94434   \\
    \hline 
  \end{tabular}}
  \caption{QAOA performance comparison for graph partitioning problem}
  \label{Table1'}
\end{table}

\begin{table}[htbp]
  \centering
  \resizebox{0.45\textwidth}{!}{\begin{tabular}{|c|c|c|}
    \hline
    Graph/Mixer & $H_M$ & $H_\chi$ \\
    \hline
    $\Gamma_1$ & $0.01252$ & $0.007$ \\
     \hline
    $\Gamma_2$ & $3.054\cdot10^{-7}$&$0.237$\\
    \hline
    $\Gamma_3$ & $1.9807\cdot10^{-6}$&$4.0636\cdot10^{-15}$\\
    \hline
    $\Gamma_4$ & $0.0033$ &$0.0169$ \\
    \hline
    $\Gamma_5$ & $8.1731\cdot10^{-7}$&$5.9511\cdot 10^{-5}$\\
    \hline  
    $\Gamma_6$ & $1.2231\cdot10^{-8}$& \\
    \hline
    $\mathfrak{G}$ &  $9.125\cdot10^{-7}$&$0.0145$ \\
    \hline 
  \end{tabular}}
  \caption{Table of $p$-values for Student's $t$-test}
  \label{Table2}
\end{table}
\newpage

\begin{figure}[htbp!]
  \centering
  \begin{tikzpicture}[scale=0.55]
  \node at (12,10) {$\Gamma_1$};
    \begin{axis}[
      xlabel={Energy},
      ylabel={Frequency},
      ymin=0,
      ymax=20,
      xmin=0,
      xmax=3,
      width=0.8\linewidth,
     cycle list={
        {fill=blue!30, draw=blue!50!black},
        {fill=orange!30, draw=orange!50!black,fill opacity=0.5},
      },
    ]
      \addplot +[hist={data min=0, bins=20}]
        table [y index=0] {
          0.7197265625
          0.47265625
          0.7060546875
          0.951171875
          0.76171875
          0.693359375
          0.3583984375
          1.111328125
          1.162109375
          0.4345703125
          1.103515625
          0.8515625
          0.7138671875
          1.068359375
          0.6728515625
          0.673828125
          1.1064453125
          0.921875
          0.8251953125
          0.5615234375
          0.60546875
          0.744140625
          0.400390625
          0.4580078125
          0.8466796875
          0.4697265625
          0.802734375
          0.5439453125
          0.451171875
          0.5068359375
          1.0009765625
          0.6708984375
          0.474609375
          0.734375
          0.58203125
          1.1025390625
          0.97265625
          0.4443359375
          0.970703125
          0.4609375
          0.9208984375
          0.4873046875
          1.0146484375
          0.8544921875
          0.705078125
          0.5205078125
          0.6640625
          1.0341796875
          0.521484375
          0.462890625
        };
       
      \addplot +[hist={data min=0, bins=20}]
        table [y index=0] {
         0.4296875
         0.337890625
         0.453125
         0.4287109375
         0.587890625
         0.7451171875
         1.05078125
         0.4287109375
         0.2177734375
         0.1923828125
         0.94921875
         0.5126953125
         0.423828125
         0.3095703125
         0.6376953125
         0.47265625
         0.34375
         0.86328125
         0.5947265625
         0.5712890625
         0.458984375
         0.5810546875
         0.6552734375
         0.49609375
         0.6533203125
         0.6552734375
         0.4306640625
         0.439453125
         2.6904296875
         0.306640625
         0.4619140625
         0.3212890625
         0.71484375
         0.33984375
         0.564453125
         0.5341796875
         1.271484375
         0.890625
         0.4296875
         0.421875
         0.3603515625
         0.4365234375
         0.587890625
         0.53125
         0.5517578125
         0.3251953125
         0.5400390625
         0.4541015625
         0.4560546875
         0.34765625
        };
    \end{axis}
  \end{tikzpicture}\hspace{0.1in}
  \begin{tikzpicture}[scale=0.55]
   \node at (12,10) {$\Gamma_2$};
    \begin{axis}[
      xlabel={Energy},
      ylabel={Frequency},
      ymin=0,
      ymax=18,
      xmin=0,
      xmax=1.8,
      width=0.8\linewidth,
     cycle list={
        {fill=blue!30, draw=blue!50!black},
        {fill=orange!30, draw=orange!50!black,fill opacity=0.5},
      },
    ]
      \addplot +[hist={data min=0, bins=15}]
        table [y index=0] {
0.8935546875
0.7373046875
1.03125
0.9853515625
0.59375
0.8427734375
1.3671875
1.0283203125
0.7568359375
1.2744140625
0.8642578125
1.0576171875
1.3232421875
0.9296875
1.1103515625
1.345703125
0.98046875
0.9306640625
0.7158203125
1.3310546875
1.228515625
1.099609375
0.6767578125
0.8671875
1.3671875
0.7724609375
0.5341796875
0.7314453125
0.931640625
1.3564453125
1.2001953125
1.203125
1.1240234375
1.0009765625
1.2568359375
1.296875
0.91015625
0.798828125
1.2861328125
0.9072265625
0.73828125
0.7431640625
0.943359375
0.953125
0.6953125
0.931640625
0.7392578125
0.4814453125
0.6435546875
0.8115234375
0.865234375
0.7880859375
1.1826171875
0.5283203125
1.25390625
1.349609375

        };
       
      \addplot +[hist={data min=0, bins=15}]
        table [y index=0] {
1.056640625
0.947265625
0.724609375
0.69140625
0.7978515625
0.8427734375
0.728515625
1.103515625
0.8837890625
0.56640625
0.76171875
0.8798828125
0.7705078125
0.85546875
1.0478515625
0.6630859375
0.419921875
0.5029296875
0.96484375
0.6982421875
0.666015625
0.9130859375
0.4892578125
0.4345703125
0.634765625
0.5908203125
0.6484375
0.548828125
0.85546875
0.845703125
0.8125
0.896484375
1.0224609375
0.8525390625
0.5224609375
0.810546875
0.6953125
0.9267578125
0.50390625
0.443359375
0.873046875
0.76953125
0.646484375
0.6591796875
0.4189453125
0.728515625
0.94921875
0.67578125
1.140625
0.74609375
0.888671875
0.369140625
0.8662109375
0.5361328125
0.625
0.7314453125

};
    \end{axis}
  \end{tikzpicture}

  \begin{tikzpicture}[scale=0.55]
   \node at (12,10) {$\Gamma_3$};
     \begin{axis}[
      xlabel={Energy},
      ylabel={Frequency},
      ymin=0,
      ymax=24,
      xmin=0,
      xmax=3,
      width=0.8\linewidth,
     cycle list={
        {fill=blue!30, draw=blue!50!black},
        {fill=orange!30, draw=orange!50!black,fill opacity=0.5},
      },
    ]
      \addplot +[hist={data min=0, bins=15}]
        table [y index=0] {
0.7587890625
1.55078125
1.1533203125
1.376953125
1.3876953125
0.6533203125
1.0986328125
0.8359375
1.302734375
1.9658203125
1.2373046875
1.2978515625
1.6015625
1.24609375
1.1748046875
1.859375
1.287109375
1.2158203125
1.4091796875
1.583984375
1.2568359375
1.0625
1.1533203125
1.2177734375
1.2255859375
0.8515625
0.9443359375
0.8984375
2.0322265625
1.107421875
1.2724609375
1.0517578125
1.2353515625
1.3232421875
0.7734375
2.04296875
0.7099609375
1.4326171875
1.2724609375
1.1435546875
1.220703125
0.673828125
1.0654296875
0.9013671875
1.517578125
1.00390625
0.77734375
1.5888671875
1.376953125
1.0849609375
1.4072265625
1.4404296875
1.48828125
1.58984375
1.359375
1.4697265625
        };
       
      \addplot +[hist={data min=0, bins=15}]
        table [y index=0] {
0.99609375
1.8056640625
1.8330078125
1.83984375
0.6435546875
1.412109375
0.908203125
1.021484375
1.857421875
1.0087890625
1.0
0.73828125
0.4228515625
0.49609375
0.7626953125
1.2744140625
0.626953125
1.0380859375
0.732421875
0.6943359375
0.8876953125
1.0048828125
0.8916015625
0.84375
1.080078125
1.1171875
0.783203125
0.75
1.119140625
0.984375
1.0966796875
1.0576171875
0.96484375
1.501953125
0.474609375
0.83203125
0.9482421875
0.841796875
1.1787109375
1.0810546875
0.8466796875
0.6689453125
1.12109375
0.767578125
0.625
0.880859375
0.8837890625
0.5546875
1.01171875
0.794921875
0.5693359375
0.8076171875
0.369140625
0.7451171875
0.5517578125
0.576171875
};
    \end{axis}
  \end{tikzpicture}\hspace{0.1in}
\begin{tikzpicture}[scale=0.55]
 \node at (12,10) {$\Gamma_4$};
    \begin{axis}[
      xlabel={Energy},
      ylabel={Frequency},
      ymin=0,
      ymax=20,
      xmin=0,
      xmax=4,
      width=0.8\linewidth,
     cycle list={
        {fill=blue!30, draw=blue!50!black},
        {fill=orange!30, draw=orange!50!black,fill opacity=0.5},
      },
    ]
      \addplot +[hist={data min=0, bins=15}]
        table [y index=0] {
1.9794921875
1.76171875
1.90234375
2.0283203125
1.8857421875
1.61328125
1.4912109375
1.8388671875
1.67578125
1.6982421875
1.2509765625
1.087890625
1.53515625
0.9619140625
1.6357421875
0.9931640625
1.2158203125
1.318359375
1.4931640625
1.01171875
1.4384765625
1.6015625
1.1884765625
0.73828125
1.9501953125
1.3408203125
0.9677734375
1.86328125
1.93359375
1.52734375
2.2734375
1.4296875
1.052734375
1.82421875
1.5869140625
1.576171875
1.5947265625
0.9580078125
2.140625
1.41015625
0.9755859375
1.423828125
1.53515625
1.427734375
1.6650390625
1.1162109375
1.30078125
1.7001953125
1.05078125
1.2109375
1.07421875
1.591796875
1.0
2.087890625
1.6591796875
1.60546875
        };
       
      \addplot +[hist={data min=0, bins=15}]
        table [y index=0] {
1.337890625
4.0205078125
1.150390625
1.103515625
1.1376953125
1.3916015625
0.78125
1.111328125
1.150390625
1.1142578125
1.1484375
0.8193359375
1.2724609375
0.720703125
1.2490234375
1.2998046875
0.72265625
0.8916015625
1.3759765625
1.763671875
0.7705078125
1.509765625
0.697265625
0.771484375
1.623046875
1.0498046875
0.7705078125
1.03515625
1.5703125
0.8212890625
0.7666015625
1.5078125
1.5185546875
0.7626953125
1.1513671875
1.0986328125
1.06640625
0.5556640625
0.9345703125
1.330078125
1.3232421875
4.037109375
1.4052734375
1.0908203125
1.3017578125
1.5078125
1.091796875
1.5419921875
0.8427734375
0.51171875
0.7998046875
1.43359375
0.775390625
0.755859375
0.73828125
0.939453125
};
    \end{axis}
  \end{tikzpicture}
    \begin{tikzpicture}[scale=0.55]
     \node at (12,10) {$\Gamma_5$};
    \begin{axis}[
      xlabel={Energy},
      ylabel={Frequency},
      ymin=0,
      ymax=20,
      xmin=0,
      xmax=4,
      width=0.8\linewidth,
     cycle list={
        {fill=blue!30, draw=blue!50!black},
        {fill=orange!30, draw=orange!50!black,fill opacity=0.5},
      },
    ]
      \addplot +[hist={data min=0, bins=15}]
        table [y index=0] {
          0.8876953125
          1.2763671875
          1.3173828125
          0.8857421875
          0.67578125
          1.2724609375
          1.3515625
          1.04296875
          1.8623046875
          2.021484375
          1.744140625
          2.0673828125
          1.5673828125
          1.3984375
          1.216796875
          1.748046875
          0.90625
          1.361328125
          1.177734375
          1.7314453125
          1.146484375
          0.9697265625
          1.4765625
          0.875
          1.423828125
          1.251953125
          0.6162109375
          3.7978515625
          0.90234375
          1.173828125
          0.7646484375
          0.787109375
          2.12109375
          1.5830078125
          1.3203125
          0.8544921875
          0.8984375
          1.5478515625
          0.771484375
          1.2041015625
          1.7158203125
          1.455078125
          1.5888671875
          1.1015625
          1.6943359375
          1.44140625
          1.2958984375
          0.9619140625
          1.3642578125
          1.7265625
        };
       
      \addplot +[hist={data min=0, bins=15}]
        table [y index=0] {
          0.5556640625
          1.0712890625
          0.814453125
          0.9384765625
          0.5458984375
          1.19921875
          0.697265625
          1.0419921875
          1.1787109375
          1.076171875
          0.7392578125
          0.96484375
          0.8251953125
          0.9091796875
          0.3515625
          1.1796875
          1.2470703125
          1.1689453125
          1.0224609375
          0.5908203125
          0.962890625
          0.98828125
          0.5693359375
          0.88671875
          1.0869140625
          0.5478515625
          1.2666015625
          0.755859375
          0.7587890625
          0.87109375
          1.005859375
          0.7548828125
          0.7685546875
          1.0107421875
          1.0537109375
          1.90625
          0.7744140625
          1.0517578125
          0.73828125
          0.8505859375
          0.9853515625
          0.6572265625
          0.7197265625
          1.0859375
          1.1904296875
          0.451171875
          1.01171875
          1.232421875 
          0.9833984375 
          0.6982421875
        };
    \end{axis} 
  \end{tikzpicture}\hspace{0.1in}
  \begin{tikzpicture}[scale=0.55]
   \node at (12,10) {$\Gamma_6$};
    \begin{axis}[
      xlabel={Energy},
      ylabel={Frequency},
      ymin=0,
      ymax=9,
      xmin=0,
      xmax=1.5,
      width=0.8\linewidth,
     cycle list={
        {fill=blue!30, draw=blue!50!black},
        {fill=orange!30, draw=orange!50!black,fill opacity=0.5},
      },
    ]
    
      \addplot +[hist={data min=0,bins=15}]
        table [y index=0] {
            1.2734375
            0.8828125
            0.9345703125
            0.7607421875
            0.8056640625
            0.8671875
            1.029296875
            0.9658203125
            0.9189453125
            0.677734375
            0.923828125
            1.4697265625
            0.943359375
            0.50390625
            0.7412109375
            0.80078125
            1.2392578125
            0.619140625
            0.8486328125
            0.865234375
            0.5771484375
            0.908203125
            0.7236328125
            0.818359375
            1.2490234375
            0.8076171875
            0.7763671875
            0.501953125
        };
       
      \addplot +[hist={data min=0,bins=15}]
        table [y index=0] {
            0.740234375
            0.5732421875
            0.4287109375
            0.8701171875
            0.484375
            0.8056640625
            0.4775390625
            0.5478515625
            0.548828125
            0.431640625
            0.62109375
            0.5361328125
            0.359375
            0.3427734375
            0.634765625
            0.361328125
            0.5654296875
            0.4560546875
            0.509765625
            0.771484375
            0.71484375
            0.4404296875
            0.4228515625
            0.5048828125
            0.556640625
            0.3310546875
            0.169921875
            0.4287109375
        };
    \end{axis}
  \end{tikzpicture}

  \begin{tikzpicture}[scale=0.55]
   \node at (12,10) {$\mathfrak{G}$};
    \begin{axis}[
      xlabel={Energy},
      ylabel={Frequency},
      ymin=0,
      ymax=14,
      xmin=4,
      xmax=15,
      width=0.8\linewidth,
     cycle list={
        {fill=blue!30, draw=blue!50!black},
        {fill=orange!30, draw=orange!50!black,fill opacity=0.5},
      },
    ]
      \addplot +[hist={data min=0, bins=20}]
        table [y index=0] {
       6.244140625
       11.4814453125
       7.03125
       13.751953125
       10.3359375
       11.341796875
       8.283203125
       10.568359375
       8.0009765625
       8.5439453125
       9.2236328125
       11.728515625
       9.7509765625
       8.8525390625
       6.7548828125
       11.6884765625
       10.36328125
       8.052734375
       7.7373046875
       9.3095703125
       11.6220703125
       11.138671875
       9.30078125
       9.0810546875
       8.375
       9.818359375
       10.845703125
       10.0556640625
       11.369140625
       12.1640625
       11.9833984375
       11.3623046875
       10.6640625
       10.3486328125
       9.7333984375
       9.09765625
       9.6328125
       11.68359375
       10.5478515625
       8.7705078125
       10.8388671875
       8.5537109375
       10.9375
       10.8603515625
       9.2939453125
       13.6357421875
       10.5087890625
       11.9189453125
       10.1201171875
       10.759765625
        };
       
      \addplot +[hist={data min=0, bins=20}]
        table [y index=0] {
         11.1982421875
         10.380859375
         10.140625
         8.1171875
         6.310546875
         9.2841796875
         11.1376953125
         5.552734375
         8.103515625
         10.216796875
         6.7373046875
         5.8720703125
         7.8486328125
         7.1875
         9.87109375
         4.4765625
         9.70703125
         5.384765625
         4.5400390625
         6.044921875
         9.1015625
         6.666015625
         9.892578125
         10.5
         7.5126953125
         5.4208984375
         9.3388671875
         5.251953125
         4.9921875
         4.9287109375
         5.7861328125
         5.6337890625
         9.662109375
         7.44140625
         7.94140625
         5.05859375
         9.96484375
         8.8173828125
         9.5810546875
         9.0849609375
         11.908203125
         12.74609375
         8.8642578125
         9.662109375
         6.443359375
         10.765625
         5.896484375
         6.01953125
         10.4931640625
         9.1298828125
        };
    \end{axis}
  \end{tikzpicture}

  \caption{Histograms illustrating the frequency distributions of $\mathcal{E}p$-values for algorithms utilizing mixers $H_M$ and $H_M$}
  \label{Histograms1}
\end{figure}

\begin{figure}[htbp!]
  \centering
 \begin{tikzpicture}[scale=0.55]
 \node at (12,10) {$\Gamma_1$};
    \begin{axis}[
      xlabel={Energy},
      ylabel={Frequency},
      ymin=0,
      ymax=20,
      xmin=0,
      xmax=2.4,
      width=0.8\linewidth,
     cycle list={
        {fill=blue!30, draw=blue!50!black},
        {fill=orange!30, draw=orange!50!black,fill opacity=0.5},
      },
    ]
      \addplot +[hist={data min=0, bins=20}]
        table [y index=0] {
          0.7197265625
          0.47265625
          0.7060546875
          0.951171875
          0.76171875
          0.693359375
          0.3583984375
          1.111328125
          1.162109375
          0.4345703125
          1.103515625
          0.8515625
          0.7138671875
          1.068359375
          0.6728515625
          0.673828125
          1.1064453125
          0.921875
          0.8251953125
          0.5615234375
          0.60546875
          0.744140625
          0.400390625
          0.4580078125
          0.8466796875
          0.4697265625
          0.802734375
          0.5439453125
          0.451171875
          0.5068359375
          1.0009765625
          0.6708984375
          0.474609375
          0.734375
          0.58203125
          1.1025390625
          0.97265625
          0.4443359375
          0.970703125
          0.4609375
          0.9208984375
          0.4873046875
          1.0146484375
          0.8544921875
          0.705078125
          0.5205078125
          0.6640625
          1.0341796875
          0.521484375
          0.462890625
        };
       
      \addplot +[hist={data min=0, bins=20}]
        table [y index=0] {
         0.7509765625
         0.466796875
         0.6162109375
         0.4423828125
         0.9833984375
         0.66015625
         0.45703125
         2.23046875
         0.4990234375
         0.162109375
         0.6943359375
         0.44921875
         0.3828125
         0.5478515625
         0.509765625
         0.572265625
         1.1337890625
         0.6220703125
         0.568359375
         0.3388671875
         0.4521484375
         0.572265625
         0.482421875
         0.7216796875
         0.4609375
         0.9482421875
         0.53125
         0.64453125
         0.419921875
         0.5185546875
         0.3173828125
         0.5400390625
         0.5546875
         0.603515625
         0.552734375
         0.4140625
         0.396484375
         0.333984375
         0.56640625
         0.451171875
         0.6630859375
         0.3564453125
         0.259765625
         1.30078125 
         0.318359375 
         0.2587890625
         0.427734375 
         0.357421875
         0.4814453125
         0.634765625
        };
    \end{axis}
  \end{tikzpicture}
 \begin{tikzpicture}[scale=0.55]
 \node at (12,10) {$\Gamma_2$};
    \begin{axis}[
      xlabel={Energy},
      ylabel={Frequency},
      ymin=0,
      ymax=25,
      xmin=0,
      xmax=3.5,
      width=0.8\linewidth,
     cycle list={
        {fill=blue!30, draw=blue!50!black},
        {fill=orange!30, draw=orange!50!black,fill opacity=0.5},
      },
    ]
      \addplot +[hist={data min=0, bins=15}]
        table [y index=0] {
0.8935546875
0.7373046875
1.03125
0.9853515625
0.59375
0.8427734375
1.3671875
1.0283203125
0.7568359375
1.2744140625
0.8642578125
1.0576171875
1.3232421875
0.9296875
1.1103515625
1.345703125
0.98046875
0.9306640625
0.7158203125
1.3310546875
1.228515625
1.099609375
0.6767578125
0.8671875
1.3671875
0.7724609375
0.5341796875
0.7314453125
0.931640625
1.3564453125
1.2001953125
1.203125
1.1240234375
1.0009765625
1.2568359375
1.296875
0.91015625
0.798828125
1.2861328125
0.9072265625
0.73828125
0.7431640625
0.943359375
0.953125
0.6953125
0.931640625
0.7392578125
0.4814453125
0.6435546875
0.8115234375
0.865234375
0.7880859375
1.1826171875
0.5283203125
1.25390625
1.349609375
        };
       
      \addplot +[hist={data min=0, bins=15}]
        table [y index=0] {
0.57421875
1.359375
0.9267578125
0.5166015625
2.634765625
1.21484375
0.5166015625
0.705078125
0.7509765625
0.8310546875
0.5078125
0.396484375
0.720703125
0.681640625
0.80859375
0.947265625
0.541015625
0.630859375
3.3291015625
0.998046875
0.6767578125
0.939453125
0.708984375
0.498046875
0.8095703125
0.66015625
0.6796875
0.546875
1.3115234375
0.720703125
0.5595703125
0.7919921875
0.86328125
3.357421875
0.705078125
0.7841796875
0.69140625
1.080078125
0.7685546875 
0.908203125
0.4140625
0.794921875
0.6171875
0.73828125
0.5966796875
0.9150390625 
0.5263671875
0.603515625
0.6123046875
0.5517578125
0.580078125
1.0693359375
0.75390625
0.4228515625
1.1865234375
0.6171875

};
    \end{axis}
  \end{tikzpicture}
 \begin{tikzpicture}[scale=0.55]
   \node at (12,10) {$\Gamma_3$};
     \begin{axis}[
      xlabel={Energy},
      ylabel={Frequency},
      ymin=0,
      ymax=24,
      xmin=0,
      xmax=3,
      width=0.8\linewidth,
     cycle list={
        {fill=blue!30, draw=blue!50!black},
        {fill=orange!30, draw=orange!50!black,fill opacity=0.5},
      },
    ]
      \addplot +[hist={data min=0, bins=15}]
        table [y index=0] {
0.7587890625
1.55078125
1.1533203125
1.376953125
1.3876953125
0.6533203125
1.0986328125
0.8359375
1.302734375
1.9658203125
1.2373046875
1.2978515625
1.6015625
1.24609375
1.1748046875
1.859375
1.287109375
1.2158203125
1.4091796875
1.583984375
1.2568359375
1.0625
1.1533203125
1.2177734375
1.2255859375
0.8515625
0.9443359375
0.8984375
2.0322265625
1.107421875
1.2724609375
1.0517578125
1.2353515625
1.3232421875
0.7734375
2.04296875
0.7099609375
1.4326171875
1.2724609375
1.1435546875
1.220703125
0.673828125
1.0654296875
0.9013671875
1.517578125
1.00390625
0.77734375
1.5888671875
1.376953125
1.0849609375
1.4072265625
1.4404296875
1.48828125
1.58984375
1.359375
1.4697265625
        };
       
      \addplot +[hist={data min=0, bins=15}]
        table [y index=0] {
1.560546875
0.9970703125
0.8369140625
0.6494140625
0.7734375
1.125
0.5458984375
0.5458984375
0.6767578125
0.36328125
0.6123046875
0.626953125
0.8505859375
0.9521484375
0.8095703125
0.5498046875
0.701171875
0.62109375
0.3193359375
0.6328125
0.869140625
0.6708984375
0.48046875
0.8056640625
0.8564453125
0.9853515625
0.52734375
0.5791015625
1.544921875
0.5908203125
0.685546875
0.5751953125
0.5126953125
1.0576171875
1.541015625
0.9013671875
0.73828125
0.75390625
0.8974609375
0.958984375
0.513671875
0.4990234375
0.5185546875
0.85546875
0.552734375
0.5048828125
0.931640625
0.4443359375
0.259765625
0.5
1.0419921875
0.5078125
0.4580078125
0.7392578125
0.67578125
0.783203125
};
\end{axis}
\end{tikzpicture}
   \begin{tikzpicture}[scale=0.55]
   \node at (12,10) {$\Gamma_4$};
    \begin{axis}[
      xlabel={Energy},
      ylabel={Frequency},
      ymin=0,
      ymax=20,
      xmin=0,
      xmax=4,
      width=0.8\linewidth,
     cycle list={
        {fill=blue!30, draw=blue!50!black},
        {fill=orange!30, draw=orange!50!black,fill opacity=0.5},
      },
    ]
      \addplot +[hist={data min=0, bins=15}]
        table [y index=0] {
1.9794921875
1.76171875
1.90234375
2.0283203125
1.8857421875
1.61328125
1.4912109375
1.8388671875
1.67578125
1.6982421875
1.2509765625
1.087890625
1.53515625
0.9619140625
1.6357421875
0.9931640625
1.2158203125
1.318359375
1.4931640625
1.01171875
1.4384765625
1.6015625
1.1884765625
0.73828125
1.9501953125
1.3408203125
0.9677734375
1.86328125
1.93359375
1.52734375
2.2734375
1.4296875
1.052734375
1.82421875
1.5869140625
1.576171875
1.5947265625
0.9580078125
2.140625
1.41015625
0.9755859375
1.423828125
1.53515625
1.427734375
1.6650390625
1.1162109375
1.30078125
1.7001953125
1.05078125
1.2109375
1.07421875
1.591796875
1.0
2.087890625
1.6591796875
1.60546875
        };
       
      \addplot +[hist={data min=0, bins=15}]
        table [y index=0] {
0.5810546875
1.6318359375
1.1279296875
1.3994140625
1.15234375
0.984375
1.587890625
0.5703125
1.2109375
1.6142578125
1.646484375
0.7333984375
0.8291015625
0.8759765625
1.892578125
1.4560546875
0.95703125
1.1044921875
1.314453125
1.44140625
0.8701171875
0.9189453125
2.033203125
0.7548828125
0.73828125
0.8232421875
0.7509765625
1.126953125
1.04296875
1.16796875
0.478515625
1.5283203125
1.3818359375
1.44921875
0.6689453125
0.458984375
1.1455078125
4.06640625
1.361328125
0.8427734375
1.205078125
0.5595703125
1.33984375
1.12109375
1.4677734375
4.0927734375
1.125
1.2421875
1.1181640625
0.439453125
1.2666015625
1.5205078125
1.4794921875
1.6552734375
1.26953125
0.9033203125

};
    \end{axis}
  \end{tikzpicture}

    \begin{tikzpicture}[scale=0.55]
    \node at (12,10) {$\Gamma_5$};
    \begin{axis}[
      xlabel={Energy},
      ylabel={Frequency},
      ymin=0,
      ymax=20,
      xmin=0,
      xmax=4,
      width=0.8\linewidth,
     cycle list={
        {fill=blue!30, draw=blue!50!black},
        {fill=orange!30, draw=orange!50!black,fill opacity=0.5},
      },
    ]
      \addplot +[hist={data min=0, bins=15}]
        table [y index=0] {
           0.8876953125
          1.2763671875
          1.3173828125
          0.8857421875
          0.67578125
          1.2724609375
          1.3515625
          1.04296875
          1.8623046875
          2.021484375
          1.744140625
          2.0673828125
          1.5673828125
          1.3984375
          1.216796875
          1.748046875
          0.90625
          1.361328125
          1.177734375
          1.7314453125
          1.146484375
          0.9697265625
          1.4765625
          0.875
          1.423828125
          1.251953125
          0.6162109375
          3.7978515625
          0.90234375
          1.173828125
          0.7646484375
          0.787109375
          2.12109375
          1.5830078125
          1.3203125
          0.8544921875
          0.8984375
          1.5478515625
          0.771484375
          1.2041015625
          1.7158203125
          1.455078125
          1.5888671875
          1.1015625
          1.6943359375
          1.44140625
          1.2958984375
          0.9619140625
          1.3642578125
          1.7265625
        };
       \addplot +[hist={data min=0, bins=15}]
        table [y index=0] {
          0.7705078125
          1.015625
          0.6357421875
          1.1220703125
          1.1748046875
          1.021484375
          0.7626953125
          0.84375
          1.0478515625
          0.7451171875
          0.2939453125
          0.7197265625
          3.564453125
          0.6787109375
          0.943359375
          1.03125
          1.0517578125
          1.2998046875
          0.5810546875
          0.419921875 
          1.072265625
          0.98828125 
          1.052734375
          1.1630859375
          0.775390625
          1.158203125
          1.0888671875 
          0.92578125
          1.00390625
          0.7197265625
          1.2841796875
          1.115234375
          1.0029296875
          0.4296875 
          0.9267578125
          1.1337890625
          1.0966796875
          1.263671875
          0.7626953125
          1.0029296875
          0.6748046875 
          0.8154296875
          1.0615234375
          0.7060546875
          0.4638671875
          0.857421875 
          0.46484375
          0.6328125 
          0.931640625 
          0.7626953125
        };
    \end{axis}
  \end{tikzpicture}
\begin{tikzpicture}[scale=0.55]
  \node at (12,10) {$\mathfrak{G}$};
    \begin{axis}[
      xlabel={Energy},
      ylabel={Frequency},
      ymin=0,
      ymax=14,
      xmin=4,
      xmax=15,
      width=0.8\linewidth,
     cycle list={
        {fill=blue!30, draw=blue!50!black},
        {fill=orange!30, draw=orange!50!black,fill opacity=0.5},
      },
    ]
      \addplot +[hist={data min=0, bins=20}]
        table [y index=0] {
       6.244140625
       11.4814453125
       7.03125
       13.751953125
       10.3359375
       11.341796875
       8.283203125
       10.568359375
       8.0009765625
       8.5439453125
       9.2236328125
       11.728515625
       9.7509765625
       8.8525390625
       6.7548828125
       11.6884765625
       10.36328125
       8.052734375
       7.7373046875
       9.3095703125
       11.6220703125
       11.138671875
       9.30078125
       9.0810546875
       8.375
       9.818359375
       10.845703125
       10.0556640625
       11.369140625
       12.1640625
       11.9833984375
       11.3623046875
       10.6640625
       10.3486328125
       9.7333984375
       9.09765625
       9.6328125
       11.68359375
       10.5478515625
       8.7705078125
       10.8388671875
       8.5537109375
       10.9375
       10.8603515625
       9.2939453125
       13.6357421875
       10.5087890625
       11.9189453125
       10.1201171875
       10.759765625
        };
       
      \addplot +[hist={data min=0, bins=20}]
        table [y index=0] {
         9.732421875
         7.076171875
         6.4013671875
         5.33984375
         10.7353515625
         8.7021484375
         8.9619140625
         13.6474609375
         6.12109375
         8.5380859375
         7.9765625
         7.875
         13.1259765625
         5.63671875
         11.36328125
         7.5361328125
         8.2001953125
         9.3388671875
         9.4052734375
         6.982421875
         10.7568359375
         8.822265625
         5.630859375
         7.5478515625
         8.0595703125
         6.431640625
         10.6962890625
         9.6943359375
         8.9541015625
         10.05078125
         8.5751953125
         12.6513671875
         13.6552734375
         9.2294921875
         9.228515625
         4.9443359375
         7.71484375
         7.28125
         5.3447265625
         5.65234375
         13.5234375
         13.6474609375
         12.0439453125
         7.6533203125
         10.59765625
         9.6298828125
         9.1064453125
         12.6533203125
         12.947265625
         6.5439453125
        };
    \end{axis}
  \end{tikzpicture}
\caption{Histograms illustrating the frequency distributions of $\mathcal{E}p$-values for algorithms utilizing mixers $H_M$ and $H_\chi$}
  \label{Histograms2}
\end{figure}
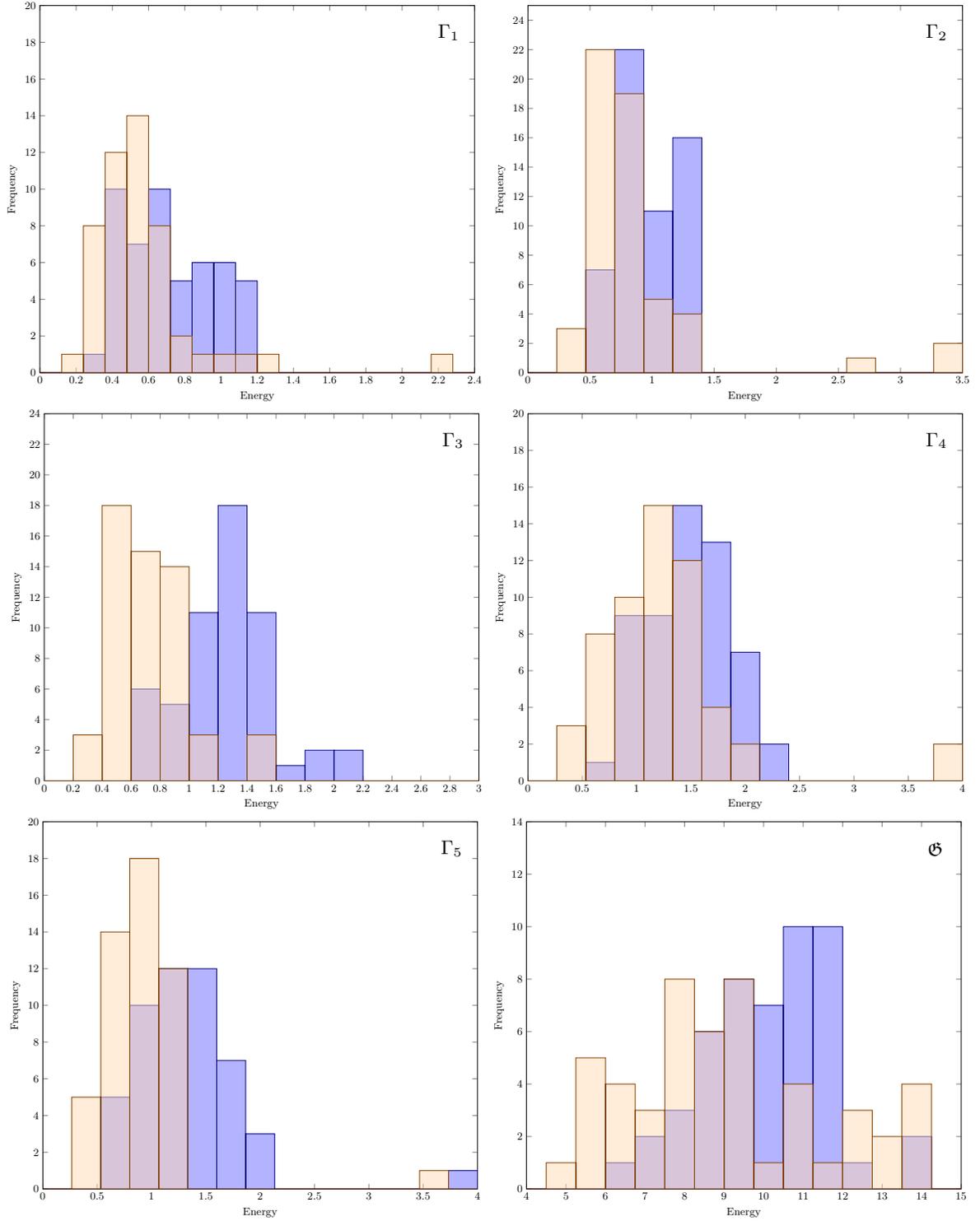
\newpage

\section{Mixer Hamiltonians for Warm-Start QAOA}

The standard QAOA  typically begins in the uniform superposition of all classical bit strings. Its primary objective is to enhance the objective function's value beyond the expected value in this initial state. A natural extension involves running a classical algorithm to generate a promising string (i.e., a good solution for certain practical goals, e.g., optimizing time/quality trade-off), then initializing the QAOA in the corresponding computational basis state to seek further improvement. This approach, known as warm-start QAOA, has been explored in various studies.

In a recent paper \cite{CFGRT}, extensive numerical experiments involving both small and large instances at varying depths revealed a notable observation. Specifically, when the QAOA commences from a single warm-start string (or superposition of strings with equal energy level), it exhibits negligible progress. The authors also highlight that these findings hold true even when the QAOA initializes with a single classical string, and the unitary operators constituting the QAOA do not explicitly rely on the initial string.

We aim to further explore this topic by offering additional insights into the limitations of warm-start QAOA. We start by observing that any mixer Hamiltonian with a nontrivial spectral gap possesses a one-dimensional eigenspace corresponding to the smallest eigenvalue \( \lambda \). Let \( |s\rangle \) denote a state spanning this subspace. It follows that \( |s\rangle \) cannot be an eigenvector for the problem Hamiltonian \( H_P \). If it were, both Hamiltonians would merely scale \( |s\rangle \), and executing the corresponding QAOA starting with the state \( |s\rangle \) would result in the identical state (up to a phase). Subsequently, measuring in the standard basis would yield a standard state with energy \( \lambda \).

In addition, as emphasized in the second assertion of Theorem \ref{PF}, every irreducible matrix with nonnegative values ensures that the vector corresponding to the highest eigenvalue has coordinates that are all nonzero (positive) in the standard basis.

This crucially implies that the ground state for a mixer Hamiltonian, satisfying the assumptions of the Perron-Frobenius theorem, \emph{must be} a superposition of all classical states with nonzero amplitudes. Consequently, the standard argument for guaranteeing the convergence of QAOA as $p\rightarrow\infty$ to an optimal classical solution is inapplicable unless the initial state is a superposition of all classical states with nonzero amplitudes.

\section{Appendix}

In this section, we introduce and detail the construction, matrix representation, and quantum circuit for the newly introduced mixer Hamiltonian $H_M$, which is employed throughout the paper. We demonstrate that $H_M$ satisfies the Perron-Frobenius theorem, ensuring the convergence of the corresponding Quantum Approximate Optimization Algorithm as the number of iterations, $p$, tends to infinity. Additionally, we revisit essential definitions and provide a concrete representation of the result concerning subgroups of $\mathcal{S}$ that commute with the actions of the operators $H_M$ and $B$ on $W$.

\subsection{New Mixers: $H_M$ and $H_\chi$}

The symmetric group $S_d$ discussed in previous sections is also known as $W(U_d)$, the Weyl subgroup of the unitary group acting collectively on all qudits. A mixer Hamiltonian $H_M$, which commutes with this group's action, can be constructed as follows.

Consider the sum of all transpositions, $\zeta = \sum\limits_{1 \leq i < j \leq n} (ij) \in \mathbb{C}[S_d]$, where $\mathbb{C}[S_d]$ denotes the group algebra of $S_d$. The group algebra is a vector space with a basis indexed by group elements, where multiplication is defined by the group operation of the underlying group. This element $\zeta$ commutes with all permutations and, therefore, resides in the center of the group algebra. The matrix representation of $\zeta$ in the standard basis of a vector space representing a single qudit is given by $\widehat{H}_{M_{ij}} = 1$ for $i \neq j$, and $\widehat{H}_{M_{ii}} = {d-1 \choose 2}$.  A notable practical observation is that, in the Hadamard basis, this matrix becomes diagonal:

\[
H^{\otimes \ell}\widehat{H}_M H^{\otimes \ell}=\text{diag}\left(\frac{d(d-1)}{2},\frac{(d-1)(d-2)}{2}-1,\ldots,\frac{(d-1)(d-2)}{2}-1\right),
\]

or, ignoring the addition of a scalar $\left(\frac{(d-1)(d-2)}{2}-1\right)\cdot \text{Id}$ operator:

\[
H^{\otimes \ell}\widehat{H}_M H^{\otimes \ell}=\text{diag}(d,0,\ldots,0),
\]

resulting in $e^{-\beta H^{\otimes \ell}\widehat{H}_M H^{\otimes \ell}}=\text{diag}(e^{-d\beta},1,\ldots,1)$. 

\begin{ex}
    The quantum circuits for $e^{-\beta H^{\otimes \ell}\widehat{H}_M H^{\otimes \ell}}$ with $d=4$ and $d=8$ are illustrated on Figure \ref{MixerCircuit}.

\begin{figure}[htbp!]
\begin{quantikz}
e_0\hspace{0.05in} & \qw &\gate{H} & \targ{}  &\ctrl{1} &  \targ{}   &\gate{H} & \qw   \\
e_1\hspace{0.05in} & \qw & \gate{H} & \targ{} 	 & \gate{P_{-4\beta}} & \targ{}  & \gate{H} & \qw\\
\vphantom{\rule{0pt}{0.23in}abc}\\
\end{quantikz}\hspace{0.1in} \begin{quantikz}
e_0\hspace{0.05in} & \qw &\gate{H} & \targ{}  &\ctrl{2} &  \targ{}   &\gate{H} & \qw   \\
e_1\hspace{0.05in} & \qw & \gate{H} & \targ{} 	 & \ctrl{1} & \targ{}  & \gate{H} & \qw\\
e_2\hspace{0.05in} & \qw & \gate{H} & \targ{} 	 & \gate{P_{-8\beta}} & \targ{}  & \gate{H} & \qw\\
\end{quantikz}
\caption{Quantum circuit for $e^{-\beta H^{\otimes \ell}\widehat{H}_M H^{\otimes \ell}}$ with $d=4$ and $d=8$}
\label{MixerCircuit}
\end{figure}
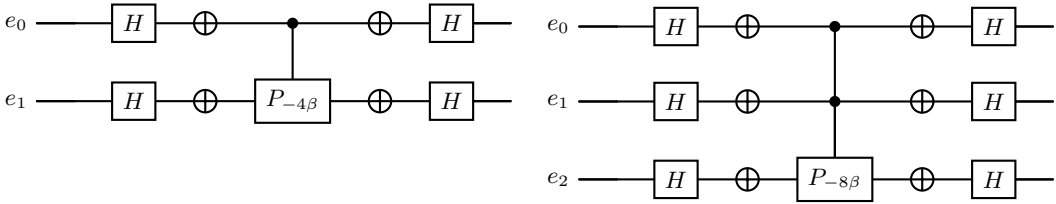
\end{ex}

We define the mixer Hamiltonian $H_M$ as the sum of individual terms $\widehat{H}^i_M$, where $\widehat{H}^i_M := Id\otimes\ldots\otimes Id\otimes\widehat{H}_M\otimes Id\ldots Id$ represents the action of $H_M$ on the $i^{th}$ copy of the vector space $V$ corresponding to the $i^{th}$ copy of the set $\mathbb{D}$. We would like to remind the reader  that $W = V \otimes \ldots \otimes V$ is the $n$-fold tensor product of such vector spaces.

It is straightforward to verify that the operator $H_M$ defined in this manner satisfies the assumptions of the Perron-Frobenius theorem (see \ref{PF}), thereby qualifying as a mixer Hamiltonian. Notably, its ground state $|\xi\rangle$ coincides with that of the classical mixer $B$.
 
This time, we start with the element $\eta = \sum_{1 \leq i < j \leq n} (-1)^{i+j}(ij) \in \mathbb{C}[S_d]$. We then examine the cyclic subgroup $\mathbb{Z}_d \subset S_d$, generated by the element $g = (23 \ldots n1)$, which cyclically permutes the elements from $1$ to $n$.

 \begin{lem}
 The element $\eta$ commutes with the group $\mathbb{Z}_d$.
 \end{lem}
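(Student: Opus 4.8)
The plan is to verify the stronger identity $g\eta g^{-1}=\eta$, where $g=(23\ldots n1)$ generates $\mathbb{Z}_d$; this is equivalent to $g$ commuting with $\eta$, and since $g$ generates the group, to $\eta$ commuting with every element of $\mathbb{Z}_d$. The only ingredient is the conjugation rule in a symmetric group, $\sigma(ij)\sigma^{-1}=(\sigma(i)\,\sigma(j))$, combined with careful bookkeeping of the sign $(-1)^{i+j}$ under reindexing.

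First I would apply the conjugation rule termwise to obtain
\[
g\eta g^{-1}=\sum_{1\leq i<j\leq n}(-1)^{i+j}\,(g(i)\,g(j)),
\]
and then split this sum according to whether $j\leq n-1$ (the ``interior'' terms) or $j=n$ (the ``wraparound'' terms), using that $g(k)=k+1$ for $k\leq n-1$ while $g(n)=1$. For an interior term, $(g(i)\,g(j))=(i{+}1,\,j{+}1)$ with $2\leq i{+}1<j{+}1\leq n$; as $(i,j)$ runs over all pairs with $j\leq n-1$ these produce exactly the transpositions $(ab)$ with $2\leq a<b\leq n$, each arising once. For a wraparound term $(i,n)$ one gets $(g(i)\,g(n))=(1,\,i{+}1)$, and as $i$ runs over $1,\ldots,n-1$ these produce exactly the transpositions $(1,b)$ with $2\leq b\leq n$, each once. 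Thus the two families together exhaust the support of $\eta$ with no repetitions, and it remains only to compare coefficients.

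For the interior terms the coefficients already agree, since $i+j=(i{+}1)+(j{+}1)-2$ shows $(-1)^{i+j}=(-1)^{(i+1)+(j+1)}$. The step that actually uses the hypothesis $d=2^{\ell}$ is the wraparound terms: the transposition $(1,\,i{+}1)$ appears in $g\eta g^{-1}$ with the old coefficient $(-1)^{i+n}$, whereas its coefficient in $\eta$ is $(-1)^{1+(i+1)}=(-1)^i$, and these match precisely because $n$ equals $d=2^{\ell}$ and is therefore even, so $(-1)^n=1$. (For odd $n$ the identity genuinely fails, as one can already check for $n=3$, so this parity input is essential and not cosmetic.) Once the coefficients are matched in both cases we conclude $g\eta g^{-1}=\eta$, hence $g$ commutes with $\eta$, hence so does all of $\mathbb{Z}_d$. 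I expect the sign bookkeeping in the wraparound case to be the sole real subtlety; the interior part is a mechanical reindexing.
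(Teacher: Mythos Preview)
Your proof is correct and follows essentially the same route as the paper's: conjugate $\eta$ by $g$ termwise via $\sigma(ij)\sigma^{-1}=(\sigma(i)\,\sigma(j))$, then reindex. In fact your argument is more careful than the paper's, which records only the interior justification $i+j+2\equiv i+j\pmod 2$ and does not explicitly isolate the wraparound terms where the evenness of $d=2^{\ell}$ is genuinely used.
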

 
\begin{proof}
To demonstrate this, we calculate $g\eta g^{-1} = \sum\limits_{1 \leq i < j \leq n} (-1)^{i+j}(g(i)g(j)) = \sum\limits_{1 \leq i < j \leq n} (-1)^{i+j}((i+1) \mod d)((j+1) \mod d) = \sum\limits_{1 \leq i < j \leq n} (-1)^{i+j}(ij) = \eta$, where $i+j+2 \equiv i+j \pmod{2}$. Hence, $g\eta g^{-1} = \eta$ implies $g\eta = \eta g$, indicating that $\eta$ and $g$ commute.
\end{proof}
  The matrix representation of $\eta$ in the standard basis of a vector space representing a single qudit is given by $\widehat{H}_{\chi_{ij}} = (-1)^{i+j}$ for $i \neq j$, and $\widehat{H}_{\chi_{ii}} = {d-1 \choose 2}$.  Furthermore, in the Hadamard basis, this matrix becomes diagonal:

\[
H^{\otimes \ell}\widehat{H}_\chi H^{\otimes \ell}=\text{diag}\left(\frac{(d-1)(d-2)}{2}-1,\ldots,\frac{(d-1)(d-2)}{2}-1,\frac{d(d-1)}{2},\frac{(d-1)(d-2)}{2}-1,\ldots,\frac{(d-1)(d-2)}{2}-1\right),
\]

or, ignoring the addition of a scalar $\left(\frac{(d-1)(d-2)}{2}-1\right)\cdot \text{Id}$ operator:

\[
H^{\otimes \ell}\widehat{H}_\chi H^{\otimes \ell}=\text{diag}(0,\ldots,0,d,0,\ldots,0), 
\]

acting with multiplication by $d$ on the one-dimensional vector space spanned by the vector $|-\underbrace{+\ldots++}_{\ell-1}\rangle$, 

resulting in $e^{-\beta H^{\otimes \ell}\widehat{H}_\chi H^{\otimes \ell}}=\text{diag}(1,\ldots,1,e^{-d\beta},1,\ldots,1)$. 

We define the mixer Hamiltonian $H_\chi$ as the sum $\widehat{H}_\chi\otimes Id\ldots\otimes Id+\sum\limits_{i=2}^n \widehat{H}^i_M$.

The operator $H_\chi$, defined in this manner, does not meet the assumptions of the Perron-Frobenius theorem (see \ref{PF}). Nevertheless, it is straightforward to verify that it possesses a one-dimensional eigenspace with the minimal eigenvalue, and therefore, a nonzero spectral gap. This eigenspace is spanned by the state $|\psi\rangle:=|-\underbrace{+\ldots++}_{n\ell-1}\rangle$.

Let $\zeta := e^{\frac{2\pi i}{d}}$ be the primitive $d$th root of unity. Under the action of the cyclic group $\mathbb{Z}_d$, the Hilbert space $W$ decomposes into a direct sum of vector spaces: $W = \bigoplus_{j=0}^{d-1} W_j$, where each $W_j$ has dimension $d^{n-1}$. The action of $\mathbb{Z}_d$ on $W_j$ is given by the equation $g \cdot w_j = \zeta^j w_j$, for all $w_j \in W_j$.

It is worth noting that $\zeta^{d/2} = e^{\pi i} = -1$, and the vector $|\psi\rangle$ resides in $W_{d/2}$.

\begin{rmk}
Since the actions of both operators \(H_P\) and \(H_\chi\) on \(W\) commute with that of the group \(\mathbb{Z}_d\), it follows that the operator \(\mathfrak{Q}_p\) preserves each subspace \(W_j\), meaning \(\mathfrak{Q}_p(W_j)\subseteq W_j\). Specifically, this implies \(\mathfrak{Q}_p(|\xi\rangle)\subseteq W_0\) and \(\mathfrak{Q}_p(|\psi\rangle)\subseteq W_j\).

    \label{subspacermk}
\end{rmk}

\subsection{Group actions and mixers}

Recall that the symmetric group $\mathcal{S}$ acts on  the set of all states $\mathbb{D}^n$ by permutations. This action can be uniquely extended to a linear action on the state vector space  $W$. Said differently, there is a homomorphism $\varphi: \mathcal{S} \rightarrow GL(W)$.

We elucidate key properties concerning the interaction of $S_d$, $K_\ell$, and $S_\ell$ (see Section $3$ for the definitions of these groups) with the objective function $F$ and the Hamiltonians $H_P$, $B$, and $H_M$.

Suppose $A: V \rightarrow V$ is a linear operator. We denote by $Z_{S_{d}}(A)$ the subgroup of elements in $S_d$ whose action on $V$ commutes with that of $A$.

\begin{prop} Suppose the objective function $F:\mathbb{D}^n\rightarrow \mathbb{R}$ is invariant with respect to the action of symmetric group $S_d$.

\begin{enumerate}
\item[$(a)$] $Z_{S_{d}}(H_P)=S_d$
\item[$(b)$] $Z_{S_{d}}(H_M)=S_d$
\item[$(c)$] $Z_{S_{d}}(B)=K_\ell\rtimes S_\ell$, where $K_\ell\triangleleft Z_{S_{2^\ell}}(B)$ is normal. 
\end{enumerate}
\label{Centralizers}
\end{prop}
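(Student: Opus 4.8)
The plan is to dispose of parts $(a)$ and $(b)$ by the trivial inclusion $S_d\subseteq Z_{S_d}(\cdot)$ (the reverse is automatic from the definition), and to put the real effort into $(c)$, which I would reduce to computing the automorphism group of a hypercube graph. For $(a)$: $H_P$ is diagonal in the standard basis with $H_Pv_x=F(x)v_x$, and $\sigma\in S_d$ acts by $\sigma v_x=v_{\sigma(x)}$, so $(\sigma H_P\sigma^{-1})v_x=F(\sigma^{-1}(x))v_x$; the $S_d$-invariance $F(\sigma^{-1}(x))=F(x)$ then forces $\sigma H_P\sigma^{-1}=H_P$ for all $\sigma\in S_d$. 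For $(b)$: the element $\zeta=\sum_{1\le i<j\le n}(ij)$ is the sum over the full conjugacy class of transpositions, hence central in $\C[S_d]$, so its image $\widehat{H}_M$ on a single qudit $V$ commutes with the $S_d$-action there; since $S_d$ acts diagonally on $W=V^{\otimes n}$ and $\widehat{H}^i_M$ equals $\widehat{H}_M$ on the $i$-th factor and the identity elsewhere, each $\widehat{H}^i_M$ — and therefore $H_M=\sum_i\widehat{H}^i_M$ — commutes with the action.

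For $(c)$ I would first identify $B$ as a graph adjacency matrix. Labelling the $d=2^\ell$ standard basis vectors of $V$ by $\ell$-bit strings $x\in\Z_2^\ell$, the gate $X_j$ is the permutation $|x\rangle\mapsto|x+e_j\rangle$, so $B=\sum_j X_j$ is precisely the adjacency matrix of the $\ell$-cube $Q_\ell$ (vertex set $\Z_2^\ell$, with $x\sim y$ iff $x+y$ has Hamming weight $1$). Since a permutation matrix $P_\sigma$ commutes with the adjacency matrix of a graph exactly when $\sigma$ is an automorphism of that graph, $Z_{S_{2^\ell}}(B)=\operatorname{Aut}(Q_\ell)$, and the task becomes the identification $\operatorname{Aut}(Q_\ell)=K_\ell\rtimes S_\ell$. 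The inclusion $K_\ell\rtimes S_\ell\subseteq\operatorname{Aut}(Q_\ell)$ is clear: a bit-flip $x\mapsto x+a$ and a bit-permutation $x\mapsto(x_{\pi^{-1}(1)},\dots,x_{\pi^{-1}(\ell)})$ both preserve Hamming distance.

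For the reverse inclusion, given $\phi\in\operatorname{Aut}(Q_\ell)$ I would use that $K_\ell$ acts transitively on vertices to reduce to $\phi(0)=0$; then $\phi$ permutes the neighbours $e_1,\dots,e_\ell$ of $0$, say $\phi(e_i)=e_{\pi(i)}$, and composing with the bit-permutation $\pi^{-1}\in S_\ell$ lets me further assume $\phi$ fixes $0$ and every $e_i$. An induction on Hamming weight then gives $\phi=\operatorname{id}$: if $\phi$ fixes every vertex of weight $<k$ and $|x|=k\ge 2$, then $\phi(x)$ must be a common neighbour of the $k$ already-fixed vertices $\{x-e_i:x_i=1\}$, and since two vertices at distance $2$ in $Q_\ell$ have exactly two common neighbours, a short case analysis forces $\phi(x)=x$ — for $k\ge3$ a third coordinate eliminates the spurious candidate, while for $k=2$ it is eliminated using $\phi(0)=0$ and injectivity of $\phi$. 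Hence every automorphism has the form $t_a\circ\pi$, and the composition law $(t_a\pi)(t_b\rho)=t_{a+\pi(b)}(\pi\rho)$ exhibits $\operatorname{Aut}(Q_\ell)$ as $K_\ell\rtimes S_\ell$ with $K_\ell$ normal; in particular its order is $2^\ell\cdot\ell!<(2^\ell)!=d!$ for $\ell\ge2$.

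The main obstacle is this rigidity step — that an automorphism of $Q_\ell$ fixing $0$ and all $e_i$ is the identity — where the common-neighbour bookkeeping, especially the weight-$2$ base case, needs care; everything else is routine. I would also record that the answer is insensitive to whether $B=\sum X_i$ is read as acting on a single qudit $V$ or on all of $W$: a diagonally acting $\sigma\in S_{2^\ell}$ lies in $\operatorname{Aut}(Q_{n\ell})$ iff $\sigma\in\operatorname{Aut}(Q_\ell)$, so $(c)$ is unchanged. A spectral alternative — noting $P_\sigma$ must preserve each eigenspace $V_k=\operatorname{span}\{H^{\otimes\ell}|y\rangle:|y|=k\}$ of $B$ — also works but is messier than the automorphism-group route.
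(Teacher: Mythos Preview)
Your arguments for $(a)$ and $(b)$ coincide with the paper's one-line justifications (invariance of $F$, centrality of $\zeta$ in $\mathbb{C}[S_d]$). For $(c)$ you take a genuinely different route: you recognise $B=\sum_j X_j$ as the adjacency matrix of the hypercube graph $Q_\ell$, reduce the centralizer computation to the classical identification $\operatorname{Aut}(Q_\ell)=K_\ell\rtimes S_\ell$, and then prove the latter by a vertex-transitivity-plus-rigidity induction on Hamming weight. The paper instead works in the group algebra: it writes $B$ as a sum of permutation operators $\sum_{g}\varphi(g)$ indexed by (the generators of) $K_\ell$, observes that $\varphi(h)B\varphi(h)^{-1}=B$ forces $h$ to permute those summands under conjugation, and concludes that ``$h$ possesses the capability to rearrange or flip the bits.'' The paper's argument is shorter and more group-theoretic in flavour, but it leaves the final implication --- that the set of $h\in S_{2^\ell}$ permuting the single-bit flips by conjugation is exactly $K_\ell\rtimes S_\ell$ --- as an unproved assertion; your hypercube argument supplies precisely that missing rigidity step via the common-neighbour analysis. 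Your approach also makes explicit that the answer is the same whether $B$ is read on a single qudit $V$ or on all of $W$, a point the paper does not discuss.
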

\begin{proof}
The statement in $(a)$ is an immediate consequence of the initial assumption. The assertion in $(b)$ follows from the fact that the element $\zeta=\sum\limits_{1\leq i<j\leq n} (ij)$ is in the center of the group algebra $\mathbb{C}[S_d]$. The justification for $(c)$ arises from the observation that for $\varphi(h)$ with $h\in S_{d}$ to commute with $B=\sum\limits_{g\in K_\ell}\varphi(g)$, it must satisfy the condition $\sum\limits_{g\in K_\ell}\varphi(hgh^{-1})=\sum\limits_{g\in K_\ell}\varphi(g)$. In essence, this implies that $h$ possesses the capability to rearrange or flip the bits.
\end{proof}

\begin{rmk}
It is interesting to point out that $K_\ell\rtimes S_\ell$ is $W(B_\ell)$, the Weyl group for root system of type $B_\ell$.
\end{rmk}

\begin{cor}
The subgroup of symmetries of the mixer $H_M$ surpasses that of $B$. For instance, when $\ell = 2$, $W(B_2)$ equals the dihedral group of order $|D_4| = 8$, while $|S_4| = 24$. For $\ell = 3$, $|W(B_3)| = 48$, and $|S_8| = 8! = 40320$. As $\ell$ increases, the disparity in orders becomes more pronounced: $|W(B_\ell)| = 2^\ell\cdot \ell!$ and $|S_d| = 2^\ell!$.
\label{GroupCor}

\end{cor}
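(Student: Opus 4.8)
The plan is to read off Corollary \ref{GroupCor} directly from Proposition \ref{Centralizers}, so the only genuine content is an elementary comparison of group orders. First I would invoke parts $(b)$ and $(c)$ of Proposition \ref{Centralizers}: the centralizer of $H_M$ inside $S_d$ is all of $S_d$, whereas the centralizer of the classical mixer $B$ is $K_\ell \rtimes S_\ell$. Since the order of a semidirect product equals the product of the orders of its factors, and $|K_\ell| = 2^\ell$ while $|S_\ell| = \ell!$, we obtain $|Z_{S_d}(B)| = 2^\ell \cdot \ell!$. On the other hand $|Z_{S_d}(H_M)| = |S_d| = |S_{2^\ell}| = (2^\ell)!$. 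The identification $K_\ell \rtimes S_\ell \cong W(B_\ell)$ recorded in the preceding remark then supplies the Weyl-group phrasing, and substituting $\ell = 2$ gives $2^2 \cdot 2! = 8 = |D_4|$ versus $4! = 24$, while $\ell = 3$ gives $2^3 \cdot 3! = 48$ versus $8! = 40320$.

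It remains to confirm that the inclusion $Z_{S_d}(B) \subsetneq Z_{S_d}(H_M) = S_d$ is proper for every $\ell \geq 2$, that is, $2^\ell \cdot \ell! < (2^\ell)!$. I would argue this by writing $(2^\ell)! = 2^\ell \cdot (2^\ell - 1)!$ and observing that $\ell \leq 2^\ell - 1$ for all $\ell \geq 1$, whence $\ell! \leq (2^\ell - 1)!$, with strict inequality as soon as $\ell < 2^\ell - 1$, which holds precisely for $\ell \geq 2$. Multiplying through by $2^\ell$ yields the claimed strict inequality, and in fact the ratio $(2^\ell)! / (2^\ell \cdot \ell!)$ grows super-exponentially in $\ell$, which makes precise the assertion that ``the disparity in orders becomes more pronounced''.

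The main obstacle here is essentially nil: once Proposition \ref{Centralizers} is in hand everything is bookkeeping. The single point worth a line of care is that $K_\ell \rtimes S_\ell$ genuinely sits inside $S_d$ as the \emph{full} centralizer of $B$ (so that the containment is a strict containment of subgroups of $S_d$, not merely an inequality of abstract cardinalities); but this is exactly what part $(c)$ of the proposition, already established, asserts.
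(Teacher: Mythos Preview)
Your proposal is correct and matches the paper's approach: the corollary is stated in the paper without proof, as an immediate consequence of Proposition \ref{Centralizers} and the remark identifying $K_\ell \rtimes S_\ell$ with $W(B_\ell)$. Your write-up in fact goes slightly beyond the paper by supplying an explicit verification of the strict inequality $2^\ell \cdot \ell! < (2^\ell)!$ for $\ell \geq 2$, which the paper leaves implicit.
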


\end{document}